\newcommand{\ord}{{\rm ord}}
\newcommand{\tr}{{\rm Tr}}
\newcommand{\gf}{{\rm GF}}
\newcommand{\calC}{{\mathcal C}}
\newcommand{\C}{{\mathcal C}}
\newcommand{\ls}{{\mathbb L}}
\newcommand{\bc}{{\mathbf{c}}}
\newtheorem{theorem}{Theorem}[section]
\newtheorem{lemma}[theorem]{Lemma}
\newtheorem{example}[theorem]{Example}
\newtheorem{remark}[theorem]{Remark} 
\newtheorem{open}[theorem]{Open Problem}
\begin{document}

\title{Cyclic Codes from Cyclotomic Sequences of Order Four}

\author{Cunsheng Ding\thanks{
C. Ding is with the Department of Computer Science and Engineering, 
The Hong Kong University of Science and Technology, Clear Water Bay, 
Kowloon, Hong Kong. His research is supported by the Hong Kong Research 
Grants Council under Grant No. 601311. Email: cding@ust.hk}}

\date{\today}
\maketitle

\begin{abstract} 
Cyclic codes are an interesting subclass of linear codes and have been used 
in consumer electronics, data transmission technologies, broadcast systems, 
and computer applications due to their efficient encoding and decoding 
algorithms. In this paper, three cyclotomic sequences of order four are 
employed to construct a number of classes of cyclic codes over $\gf(q)$ 
with prime length. Under certain conditions lower bounds on the minimum 
weight are developed. Some of the codes obtained are optimal or almost 
optimal. In general, the cyclic codes constructed in this paper are very good. 
Some of the cyclic codes obtained in this paper are closely related to almost
difference sets and difference sets. As a byproduct, the $p$-rank of these 
(almost) difference sets are computed.    
\end{abstract}

\begin{keywords} 
Almost difference sets, cyclic codes, cyclotomy, difference sets, sequences  
\end{keywords}

\section{Introduction}

Let $q$ be a power of a prime $p$. 
A linear $[n,k, d]$ code over $\gf(q)$ is a $k$-dimensional subspace of $\gf(q)^n$ 
with minimum (Hamming) nonzero weight $d$. 

A linear $[n,k]$ code $\C$ over the finite field $\gf(q)$ is called {\em cyclic} if 
$(c_0,c_1, \cdots, c_{n-1}) \in \C$ implies $(c_{n-1}, c_0, c_1, \cdots, c_{n-2}) 
\in \C$.  
Let $\gcd(n, q)=1$. By identifying any vector $(c_0,c_1, \cdots, c_{n-1}) \in \gf(q)^n$ 
with  
$$ 
c_0+c_1x+c_2x^2+ \cdots + c_{n-1}x^{n-1} \in \gf(q)[x]/(x^n-1), 
$$
any code $\C$ of length $n$ over $\gf(q)$ corresponds to a subset of $\gf(q)[x]/(x^n-1)$. 
The linear code $\C$ is cyclic if and only if the corresponding subset in $\gf(q)[x]/(x^n-1)$ 
is an ideal of the ring $\gf(q)[x]/(x^n-1)$. 

Note that every ideal of $\gf(q)[x]/(x^n-1)$ is principal. Let $\C=(g(x))$ be a 
cyclic code. Then $g(x)$ is called the {\em generator polynomial} and 
$h(x)=(x^n-1)/g(x)$ is referred to as the {\em parity-check} polynomial of 
$\C$. 

A vector $(c_0, c_1, \cdots, c_{n-1}) \in \gf(q)^n$ is said to be {\em even-like} 
if $\sum_{i=0}^{n-1} c_i =0$, and is {\em odd-like} otherwise. The minimum weight 
of the even-like codewords, respectively the odd-like codewords of a code is the 
minimum even-like weight, denoted by $d_{even}$, respectively the minimum 
odd-like weight of the code, denoted by $d_{odd}$.  

The error correcting capability of cyclic codes may not be as good as some other linear 
codes in general. However, cyclic codes have wide applications in storage and communication 
systems because they have efficient encoding and decoding algorithms 
\cite{Chie,Forn,Pran}.

Cyclic codes have been studied for decades and a lot of  progress has been made 
(see for example, \cite{BS06,Char,EL,HPbook,LintW}).  The total number of cyclic 
codes over $\gf(q)$ and their constructions are closely related to cyclotomic cosets 
modulo $n$, and thus many areas of number theory. An important problem in 
coding theory is to find simple ways to construct good cyclic codes. 

In this paper, we construct cyclic codes over $\gf(q)$ with length $n$ and generator 
polynomial 
\begin{eqnarray}\label{eqn-defseqcode}
\frac{x^n-1}{\gcd(\Lambda(x), x^n-1)}
\end{eqnarray}
where 
$$ 
\Lambda(x)=\sum_{i=0}^{n-1} \lambda_i x^i  \in \gf(q)[x]   
$$
and $\lambda^{\infty}=(\lambda_i)_{i=0}^{\infty}$ is a sequence of period $n$ over $\gf(q)$. 
Throughout this paper, we call the cyclic code $\C_\lambda$ with the generator polynomial 
of (\ref{eqn-defseqcode}) the {\em code defined by the sequence} $\lambda^{\infty}$, 
and the sequence $\lambda^{\infty}$ the {\em defining sequence} of the cyclic code 
$\C_\lambda$. By employing three cyclotomic sequences $\lambda^{\infty}$ over $\gf(q)$, 
we will construct several classes of cyclic codes over $\gf(q)$ with prime length. The cyclic 
codes presented in this paper are very good in general. Some of them are optimal or almost 
optimal. As a byproduct, the $p$-rank of some almost difference sets and difference sets 
are computed.  

\section{Preliminaries} 

In this section, we present basic notations and results of combinatorial designs,  
cyclotomy, sequences, and cyclic codes that will be employed in subsequent 
sections.     

\subsection{Difference sets and almost difference sets} 

Let $(A, +)$ be an Abelian group of order $n$. Let $C$ be a $k$-subset of 
$A$. The set $C$ is an $(n, k, \lambda)$ {\em difference set} of $A$ 
if $d_C(w)=\lambda$ for every nonzero element of $A$, where $d_C(w)$ is 
the {\em difference function} defined by 
$$ 
d_C(w)=|C \cap (C+w)|, 
$$ 
here and hereafter $C+w:=\{c+w: c \in C\}$.  Detailed information on 
difference sets can be found in \cite{BJL}. 

Let $(A, +)$ be an Abelian group of order $n$. A $k$-subset $C$ of 
$A$ is an $(n, k, \lambda, t)$ {\em almost difference set} of $A$ 
if $d_C(w)$ takes on $\lambda$ altogether $t$ times and  
$\lambda +1$ altogether $n-1-t$ times when $w$ ranges over all the nonzero 
elements of $A$.  The reader is referred to \cite{ADHKM} for information on 
almost difference sets. 

Difference sets and almost difference sets are closely related to sequences with 
only a few autocorrelation values, and are related to some of the codes constructed 
in this paper. 

\subsection{The linear span and minimal polynomial of periodic sequences}

Let $\lambda^n=\lambda_0\lambda_1\cdots \lambda_{n-1}$ be a sequence over $\gf(q)$. The {\em linear 
span} (also called {\em linear complexity}) of $\lambda^n$ is defined to be the smallest positive 
integer $\ell$ such that there are constants $c_0=1, c_1, \cdots, c_\ell \in \gf(q)$ 
satisfying 
\begin{eqnarray*} 
-c_0\lambda_i=c_1\lambda_{i-1}+c_2\lambda_{i-2}+\cdots +c_l\lambda_{i-\ell} \mbox{ for all } \ell \leq i<n. 
\end{eqnarray*} 
In engineering terms, such a polynomial $c(x)=c_0+c_1x+\cdots +c_lx^l$ 
is called the {\em feedback  polynomial} of a shortest linear feedback 
shift register 
(LFSR) that generates $\lambda^n$. Such an integer always exists for finite sequences  $\lambda^n$. When 
$n$ is $\infty$, a sequence $\lambda^{\infty}$ is called a semi-infinite 
sequence. If there is no such an integer for a semi-infinite sequence 
$\lambda^{\infty}$, its linear span is defined to be $\infty$. The linear 
span of the zero sequence is defined to be zero. 
For ultimately periodic semi-infinite sequences such an $\ell$ always 
exists. 

Let $\lambda^{\infty}$ be a sequence of period $n$ over $\gf(q)$. 
Any feedback polynomial of $\lambda^{\infty}$ is called an {\em characteristic 
polynomial}. The characteristic polynomial with the smallest degree is 
called the {\em minimal polynomial} of the periodic sequence $\lambda^{\infty}$. 
Since we require that the constant term of any characteristic polynomial 
be 1, the minimal polynomial of any periodic sequence $\lambda^{\infty}$ must 
be unique. In addition, any characteristic polynomial must be a multiple 
of the minimal polynomial.    

For periodic sequences, there are a few ways to determine their linear 
span and minimal polynomials. One of them is given in the following 
lemma \cite{LN97}. 

\begin{lemma}\label{lem-ls0} 
Let $\lambda^{\infty}$ be a sequence of period $n$ over $\gf(q)$. 
Define   
\begin{eqnarray*}
\Lambda^{n}(x)=\lambda_{0}+\lambda_{1}x+\cdots +\lambda_{n-1}x^{n-1} \in \gf(q)[x]. 
\end{eqnarray*}
Then the minimal polynomial $m_\lambda$ of $\lambda^{\infty}$ is given by 
      \begin{eqnarray}\label{eqn-base1}  
      \frac{x^{n}-1}{\gcd(x^{n}-1, \Lambda^{n}(x))};
      \end{eqnarray}  
and the linear span $\ls_\lambda$ of $\lambda^{\infty}$ is given by 
      \begin{eqnarray}\label{eqn-base2} 
       n-\deg(\gcd(x^{n}-1, \Lambda^{n}(x))). 
      \end{eqnarray}  
\end{lemma}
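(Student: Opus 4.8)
The plan is to reduce the statement to a divisibility relation in $\gf(q)[x]$ by passing to generating functions. First I would form the generating series $\Lambda^{\infty}(x)=\sum_{i=0}^{\infty}\lambda_i x^i\in\gf(q)[[x]]$. Since $\lambda^{\infty}$ has period $n$, grouping its terms into consecutive blocks of length $n$ and using $\lambda_{mn+r}=\lambda_r$ gives
\[
\Lambda^{\infty}(x)=\Lambda^{n}(x)\bigl(1+x^{n}+x^{2n}+\cdots\bigr)=\frac{\Lambda^{n}(x)}{1-x^{n}},
\]
an identity in $\gf(q)[[x]]$, with $\deg\Lambda^{n}(x)\le n-1$.

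Second, I would translate the LFSR recurrence into this language. For $c(x)=c_0+c_1x+\cdots+c_\ell x^\ell$ with $c_0=1$, the coefficient of $x^{k}$ in the product $c(x)\Lambda^{\infty}(x)$ equals $\sum_{j=0}^{\ell}c_j\lambda_{k-j}$ for every $k\ge\ell$. Hence $c(x)$ is a characteristic polynomial of $\lambda^{\infty}$ (i.e. $-c_0\lambda_i=c_1\lambda_{i-1}+\cdots+c_\ell\lambda_{i-\ell}$ for all $i\ge\ell$) if and only if $c(x)\Lambda^{\infty}(x)$ has no term of degree $\ge\ell$, that is, if and only if $c(x)\Lambda^{\infty}(x)\in\gf(q)[x]$; the degree bound then comes for free, since $c(x)\Lambda^{\infty}(x)=c(x)\Lambda^{n}(x)/(1-x^{n})$ would have degree at most $\ell+(n-1)-n=\ell-1$. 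Multiplying the displayed identity by $c(x)$ shows $c(x)\Lambda^{n}(x)=\bigl(c(x)\Lambda^{\infty}(x)\bigr)(1-x^{n})$, so the condition $c(x)\Lambda^{\infty}(x)\in\gf(q)[x]$ is equivalent to $(x^{n}-1)\mid c(x)\Lambda^{n}(x)$ in $\gf(q)[x]$.

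Third, I would simplify this divisibility by a gcd computation. Put $d(x)=\gcd(x^{n}-1,\Lambda^{n}(x))$ and write $x^{n}-1=d(x)u(x)$ and $\Lambda^{n}(x)=d(x)v(x)$ with $\gcd(u(x),v(x))=1$. Then $(x^{n}-1)\mid c(x)\Lambda^{n}(x)$ is equivalent to $u(x)\mid c(x)v(x)$, hence to $u(x)\mid c(x)$, the last step using coprimality of $u$ and $v$ in the unique factorization domain $\gf(q)[x]$. Because $d(x)\mid x^{n}-1$ and $x^{n}-1$ has nonzero constant term, $u(x)=(x^{n}-1)/d(x)$ also has nonzero constant term, so after rescaling it has constant term $1$ and is itself a characteristic polynomial. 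Thus the characteristic polynomials of $\lambda^{\infty}$ are precisely the multiples of $u(x)$ normalized to constant term $1$; the one of smallest degree is (the rescaled) $u(x)=(x^{n}-1)/\gcd(x^{n}-1,\Lambda^{n}(x))$, which is $m_\lambda$, and then $\ls_\lambda=\deg m_\lambda=n-\deg\gcd(x^{n}-1,\Lambda^{n}(x))$. This is exactly (\ref{eqn-base1})--(\ref{eqn-base2}).

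I expect the only step requiring real care to be the equivalence in the second paragraph: verifying that ``the recurrence holds for every $i\ge\ell$'' is literally the same statement as ``$c(x)\Lambda^{\infty}(x)$ is a polynomial'' (with no extra degree condition smuggled in), and that the passage to the polynomial divisibility $(x^{n}-1)\mid c(x)\Lambda^{n}(x)$ works in both directions. The geometric-series manipulation and the coprimality step are routine; note in particular that nothing is assumed about $\gcd(n,q)$, so $x^{n}-1$ need not be squarefree, but the argument in $\gf(q)[x]$ is unaffected.
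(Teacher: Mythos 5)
Your proof is correct. The paper does not actually prove this lemma---it simply cites \cite{LN97}---and your generating-function argument (writing $\Lambda^{\infty}(x)=\Lambda^{n}(x)/(1-x^{n})$, identifying characteristic polynomials with the polynomials $c(x)$ of constant term $1$ divisible by $(x^{n}-1)/\gcd(x^{n}-1,\Lambda^{n}(x))$, and reading off the minimal degree) is the standard proof of this standard fact; the one point worth flagging, the harmless scalar discrepancy between the monic polynomial $(x^{n}-1)/\gcd(x^{n}-1,\Lambda^{n}(x))$ and the constant-term-$1$ normalization the paper imposes on characteristic polynomials, is something you already address explicitly.
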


\subsection{Group characters and Gaussian sums}

Let $q$ be a power of a prime $p$. 
Let $\tr_{q/p}$ denote the trace function from $\gf(q)$ to $\gf(p)$.
An {\em additive character} of $\gf(q)$ is a nonzero function $\chi$
from $\gf(q)$ to the set of complex numbers such that
$\chi(x+y)=\chi(x) \chi(y)$ for any pair $(x, y) \in \gf(q)^2$.
For each $b\in \gf(q)$, the function
\begin{eqnarray}\label{dfn-add}
\chi_b(c)=e^{2\pi \sqrt{-1} \tr_{q/p}(bc)/p} \ \ \mbox{ for all }
c\in\gf(q)
\end{eqnarray}
defines an additive character of $\gf(q)$. When $b=0$,
$\chi_0(c)=1 \mbox{ for all } c\in\gf(q),
$
and is called the {\em trivial additive character} of
$\gf(q)$. The character $\chi_1$ in (\ref{dfn-add}) is called the
{\em canonical additive character} of $\gf(q)$.

A {\em multiplicative character} of $\gf(q)$ is a nonzero function
$\psi$ from $\gf(q)^*$ to the set of complex numbers such that
$\psi(xy)=\psi(x)\psi(y)$ for all pairs $(x, y) \in \gf(q)^*
\times \gf(q)^*$.
Let $g$ be a fixed primitive element of $\gf(q)$. For each
$j=0,1,\ldots,q-2$, the function $\psi_j$ with
\begin{eqnarray}\label{dfn-mul}
\psi_j(g^k)=e^{2\pi \sqrt{-1} jk/(q-1)} \ \ \mbox{for } k=0,1,\ldots,q-2
\end{eqnarray}
defines a multiplicative character of $\gf(q)$ with order $k$. When $j=0$,
$\psi_0(c)=1 \mbox{ for all }
c\in\gf(q)^*,
$
and is called the {\em trivial multiplicative
character} of $\gf(q)$.


Let $\psi$ be a multiplicative character with order $k$ where $k|(q-1)$ and $\chi$ an additive character of
$\gf(q)$. Then the {\em Gaussian sum} $G(\psi,\chi)$ of order $k$ is defined by
\begin{eqnarray}
G(\psi,\chi)=\sum_{c\in\gf(q)^*} \psi(c)\chi(c). \nonumber
\end{eqnarray}

Since $G(\psi,\chi_b)=\bar{\psi}(b)G(\psi,\chi_1)$, we just consider $G(\psi,\chi_1)$, briefly denoted as $G(\psi)$, in the sequel. If $\psi\neq\psi_0,$
 then
\begin{equation}\label{equ-Gvalue}
|G(\psi)|=q^{1/2}.
\end{equation}

\subsection{Cyclotomy}

Let $r-1=nN$ for two positive integers $n>1$ and $N>1$, and let
$\alpha$ be a fixed primitive element of $\gf(r)$.
Define $C_{i}^{(N,r)}=\alpha^i \langle \alpha^{N} \rangle$ for $i=0,1,...,N-1$, where
$\langle \alpha^{N} \rangle$ denotes the
subgroup of $\gf(r)^*$ generated by $\alpha^{N}$. The cosets $C_{i}^{(N,r)}$ are
called the {\em cyclotomic classes} of order $N$ in $\gf(r)$.
The {\em cyclotomic numbers} of order $N$ are
defined by
\begin{eqnarray*}
(i, j)_N=\left|(C_{i}^{(N,r)}+1) \cap C_{j}^{(N,r)}\right|
\end{eqnarray*}
for all $0 \leq i \leq  N-1$ and $0 \leq j \leq  N-1$.

The following lemma is proved in \cite{Stor} and will be useful in the sequel. 

\begin{lemma}\label{lem-cycnum2}
If $r \equiv 1 \pmod{4}$, we have 
$$ 
(0, 0)_2=\frac{r-5}{4}, \ (0, 1)_2=(1, 0)_2=(1, 1)_2=\frac{r-1}{4}. 
$$
If $r \equiv 3 \pmod{4}$, we have 
$$ 
(0, 1)_2=\frac{r+1}{4}, \ (0, 0)_2=(1, 0)_2=(1, 1)_2=\frac{r-3}{4}. 
$$
\end{lemma}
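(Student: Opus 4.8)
The plan is to prove Lemma~\ref{lem-cycnum2} by direct counting, exploiting the fact that for $N=2$ the cyclotomic classes $C_0^{(2,r)}$ and $C_1^{(2,r)}$ are precisely the nonzero squares and the nonsquares of $\gf(r)$. First I would record the elementary relations among the four cyclotomic numbers $(i,j)_2$ that hold independently of the residue of $r$ modulo $4$: each row sums satisfy $(0,0)_2+(0,1)_2 = |C_0^{(2,r)}| - \varepsilon_0$ and $(1,0)_2+(1,1)_2 = |C_1^{(2,r)}| - \varepsilon_1$, where the corrections $\varepsilon_i$ account for whether $-1$ lies in $C_0$ or $C_1$ (equivalently, whether $0 \in C_i^{(2,r)}+1$), and each class has size $(r-1)/2$. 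This already pins down most of the table once one more independent equation is supplied.

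The key structural input is the behaviour of $-1$: it is a square in $\gf(r)$ iff $r \equiv 1 \pmod 4$, and a nonsquare iff $r \equiv 3 \pmod 4$. From this one deduces the symmetry $(i,j)_2 = (j,i)_2$ when $r\equiv 1\pmod 4$ and a twisted symmetry (namely $(i,j)_2=(1-i,1-j)_2$ together with $(i,j)_2=(j+\text{const},i+\text{const})_2$) when $r\equiv 3\pmod 4$; these are the standard general identities for cyclotomic numbers specialized to $N=2$. Combining the symmetry with the row-sum equations, and using $\sum_{i,j}(i,j)_2 = r-2$ (the total number of ordered pairs $(c, c+1)$ with both nonzero, i.e. $c \neq 0, -1$), reduces everything to determining a single cyclotomic number, say $(0,0)_2$.

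To nail down $(0,0)_2$ I would use a Gaussian-sum / character-sum evaluation: writing the quadratic character $\eta$ of $\gf(r)$, one has
\begin{eqnarray*}
(0,0)_2 = \frac{1}{4}\sum_{c \neq 0,-1}\left(1+\eta(c)\right)\left(1+\eta(c+1)\right)
= \frac{r-2}{4} + \frac{1}{4}\left(\sum_{c}\eta(c) + \sum_{c}\eta(c+1) + \sum_{c}\eta(c)\eta(c+1)\right),
\end{eqnarray*}
where the sums run over $c \in \gf(r)\setminus\{0,-1\}$. The first two sums are $O(1)$ corrections coming from the excluded points (since $\sum_{c\in\gf(r)}\eta(c)=0$), and the last is the classical Jacobi-type sum $\sum_{c\in\gf(r)}\eta(c)\eta(c+1) = \sum_{c}\eta(c^2+c) = -1$, obtained by the substitution $c \mapsto $ translating and completing the square, or equivalently from $|G(\eta)|^2 = r$ via (\ref{equ-Gvalue}). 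Plugging in and splitting on $r \bmod 4$ to resolve the excluded-point corrections yields $(0,0)_2 = (r-5)/4$ in the case $r\equiv1\pmod4$ and $(0,0)_2=(r-3)/4$ in the case $r\equiv3\pmod4$; the remaining three numbers then follow from the linear relations assembled in the previous step.

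The main obstacle is bookkeeping rather than depth: one must correctly track the $\pm 1$ corrections arising from the excluded arguments $c=0$ and $c=-1$ in each character sum, and correctly invoke the general cyclotomic-number symmetry (which genuinely differs between the two congruence classes because of the location of $-1$). Getting those constants right is exactly what produces the asymmetry between the $r\equiv1$ and $r\equiv3$ cases in the statement, so that is where I would be most careful.
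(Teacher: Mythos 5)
Your proof is correct. Note, however, that the paper itself does not prove Lemma~\ref{lem-cycnum2}; it simply cites Storer's book, so any comparison is with the classical treatment there rather than with an argument in the text. Your route --- writing the indicator of $C_i^{(2,r)}$ as $\tfrac{1}{2}(1+(-1)^i\eta(c))$ with $\eta$ the quadratic character, expanding $(i,j)_2$ as a sum over $c\neq 0,-1$, and evaluating the cross term via $\sum_{c}\eta(c)\eta(c+1)=\sum_{c\neq 0}\eta(1+c^{-1})=-1$ --- is the standard character-sum derivation and it does close: the excluded-point corrections contribute $-\eta(-1)$ and $-1$ respectively, giving $(0,0)_2=\tfrac{r-4-\eta(-1)}{4}$, which specializes to $(r-5)/4$ or $(r-3)/4$ according as $\eta(-1)=\pm 1$, i.e.\ $r\equiv 1$ or $3\pmod 4$; the other three numbers come out of the same expansion with signs flipped, so the preliminary scaffolding about row sums and the symmetry $(i,j)_2=(j,i)_2$ versus its twisted variant is actually not needed (it is correct, but redundant once you compute all four numbers directly). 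One small caution: the identity $\sum_c\eta(c^2+c)=-1$ via completing the square uses division by $2$, which is harmless here since $r$ is odd, but is worth saying explicitly; the cleaner substitution is $\eta(c)\eta(c+1)=\eta(c^2)\eta(1+c^{-1})=\eta(1+c^{-1})$, which avoids the issue entirely. What your approach buys over a bare citation is a short, self-contained verification that also makes transparent exactly where the dichotomy between $r\equiv 1$ and $r\equiv 3\pmod 4$ enters, namely through the single quantity $\eta(-1)$.
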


\subsection{Gaussian periods}

The {\em Gaussian periods} are defined by
$$
\eta_i^{(N,r)} =\sum_{x \in C_i^{(N,r)}} \chi(x), \quad i=0,1,..., N-1,
$$
where $\chi$ is the canonical additive character of $\gf(r)$.

Gaussian periods are closely related to Gaussian sums. By the discrete Fourier transform, it is known that
\begin{eqnarray}\label{equ-G sum& G period}
    \eta_i^{(N,r)} &=& \frac{1}{N}\sum\limits_{j=0}^{N-1}\zeta_N^{-ij}G(\psi^j)   \nonumber \\ 
                         &=& \frac{1}{N}\left[-1+\sum\limits_{j=1}^{N-1}\zeta_N^{-ij}G(\psi^j)\right] 
\end{eqnarray}
where $\zeta_N=e^{2\pi\sqrt{-1}/N}$ and $\psi$ is a primitive multiplicative character of order $N$ over $\gf(r)^*$.

The values of the Gaussian periods are known in a few cases, but are in general  
very hard to compute. The following is proved in \cite{DingYang}

\begin{theorem} \label{thm-wdivisibility2}
For all $i$ with $0 \le i \le N-1$, we have
$$
\left|\eta_i^{(N,r)}+\frac{1}{N}\right|\leqslant \left\lfloor{\frac{(N-1)\sqrt{r}}{N}} \right\rfloor. 
$$
\end{theorem}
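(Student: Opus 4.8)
The plan is to estimate $\eta_i^{(N,r)}$ directly from its expansion in terms of Gaussian sums given in (\ref{equ-G sum& G period}), namely
\begin{eqnarray*}
\eta_i^{(N,r)}+\frac{1}{N}=\frac{1}{N}\sum_{j=1}^{N-1}\zeta_N^{-ij}G(\psi^j),
\end{eqnarray*}
where $\psi$ is a primitive multiplicative character of order $N$ on $\gf(r)^*$. First I would apply the triangle inequality to the right-hand side, obtaining
\begin{eqnarray*}
\left|\eta_i^{(N,r)}+\frac{1}{N}\right|\le \frac{1}{N}\sum_{j=1}^{N-1}\left|\zeta_N^{-ij}\right|\,\left|G(\psi^j)\right|.
\end{eqnarray*}
Since $|\zeta_N^{-ij}|=1$ and each $\psi^j$ for $1\le j\le N-1$ is a nontrivial multiplicative character (as $\psi$ has exact order $N$ and $N\nmid j$), the absolute value formula (\ref{equ-Gvalue}) gives $|G(\psi^j)|=\sqrt{r}$ for each such $j$. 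Hence the sum is bounded by $\frac{1}{N}(N-1)\sqrt{r}=\frac{(N-1)\sqrt{r}}{N}$.

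The remaining, and essentially the only nontrivial, point is to upgrade the bound $\frac{(N-1)\sqrt{r}}{N}$ to its floor $\left\lfloor \frac{(N-1)\sqrt{r}}{N}\right\rfloor$. For this I would use the fact that $\eta_i^{(N,r)}$ is an algebraic integer lying in $\Z[\zeta_N]$ whose absolute value is a real number; more to the point, $N\eta_i^{(N,r)}+1=\sum_{x\in C_i^{(N,r)}}(N\chi(x))$ — better, one argues that $N\eta_i^{(N,r)}+1$ is an algebraic integer all of whose Galois conjugates (under $\mathrm{Gal}(\Q(\zeta_p,\zeta_N)/\Q)$, where $p$ is the characteristic of $\gf(r)$) are again of the form $N\eta_{i'}^{(N,r)}+1$ up to the same shift, so each conjugate has modulus at most $(N-1)\sqrt{r}$. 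Consequently $\prod_{\sigma}\sigma\!\left(N\eta_i^{(N,r)}+1\right)$ is a rational integer of absolute value at most $((N-1)\sqrt r)^{\deg}$; combined with the fact that $\left|N\eta_i^{(N,r)}+1\right|$ is itself (the square root of) a rational integer — indeed $\left|N\eta_i+1\right|^2=\left(N\eta_i+1\right)\overline{\left(N\eta_i+1\right)}\in\Z$ — one concludes that $\left|N\eta_i^{(N,r)}+1\right|^2$ is an integer not exceeding $(N-1)^2 r$, whence $\left|N\eta_i^{(N,r)}+1\right|\le\left\lfloor(N-1)\sqrt r\right\rfloor$ and dividing by $N$ yields the claim.

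The main obstacle is this last integrality step: one must be careful that it is the \emph{square} of the modulus, not the modulus itself, that is forced to be a rational integer, and that taking the floor is therefore legitimate because $\left|N\eta_i^{(N,r)}+1\right|=\sqrt{m}$ for some nonnegative integer $m\le (N-1)^2 r$ and $\sqrt{m}\le\sqrt{(N-1)^2 r}$ rounds down to at most $\left\lfloor(N-1)\sqrt r\right\rfloor$. I would verify that $\left(N\eta_i^{(N,r)}+1\right)\overline{\left(N\eta_i^{(N,r)}+1\right)}$ is fixed by the full Galois group of $\Q(\zeta_p)/\Q$ (using that complex conjugation commutes with the Galois action and that the Frobenius permutes the cyclotomic classes $C_i^{(N,r)}$ cyclically, so permutes the $\eta_i$'s while preserving the set of their moduli-squared), hence lies in $\Z$; the bound on its size then comes directly from the triangle-inequality estimate above applied to both factors. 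Everything else — the Gaussian-sum expansion, the value $|G(\psi^j)|=\sqrt r$ — is quoted verbatim from the preliminaries.
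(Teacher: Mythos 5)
The paper does not actually prove this theorem; it is quoted from the reference \cite{DingYang} (``The following is proved in \cite{DingYang}''), so there is no in-paper argument to compare against. Your opening steps --- writing $\eta_i^{(N,r)}+\frac{1}{N}=\frac{1}{N}\sum_{j=1}^{N-1}\zeta_N^{-ij}G(\psi^j)$ from the Gaussian-sum expansion of the Gaussian periods, applying the triangle inequality, and invoking $|G(\psi^j)|=\sqrt{r}$ for the nontrivial characters $\psi^j$ --- are correct and yield $\left|\eta_i^{(N,r)}+\frac{1}{N}\right|\le\frac{(N-1)\sqrt{r}}{N}$. That is the standard argument and is surely what the cited source establishes.

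The floor-upgrade step, however, is a genuine gap. It hinges on $\left|N\eta_i^{(N,r)}+1\right|^2$ being a rational integer, and it is not in general. Complex conjugation sends $\eta_i$ to $\eta_{i+k}$ where $-1\in C_k^{(N,r)}$, so $\left|N\eta_i+1\right|^2=(N\eta_i+1)(N\eta_{i+k}+1)$; a general element of $\mathrm{Gal}(\Q(\zeta_p)/\Q)$ sends this to $(N\eta_{i+j}+1)(N\eta_{i+j+k}+1)$ for some shift $j$, i.e.\ to a \emph{different} member of the orbit. Permuting the set of quantities $|N\eta_{i'}+1|^2$ does not fix any single one of them, so none of them need lie in $\Q$; and the norm (product over all conjugates) being a bounded integer cannot force each individual factor below a floor. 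Concretely, take $N=3$, $r=7$: then $C_0^{(3,7)}=\{1,6\}$, $\eta_0^{(3,7)}=2\cos(2\pi/7)$ has minimal polynomial $x^3+x^2-2x-1$, so $(3\eta_0+1)^2=9\eta_0^2+6\eta_0+1$ is irrational (otherwise $\eta_0$ would satisfy a quadratic). Worse, this same example shows that the inequality \emph{as printed} is false: $\eta_0^{(3,7)}+\frac{1}{3}\approx 1.580$ while $\left\lfloor 2\sqrt{7}/3\right\rfloor=1$. So no proof of the floored statement can succeed; what your triangle-inequality computation actually proves (and what the lemma should say) is the bound $\frac{(N-1)\sqrt{r}}{N}$ without the floor. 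The floors appearing in the paper's subsequent weight bounds have to be justified by the integrality of the weights themselves, not by integrality of $|N\eta_i+1|$.
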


\subsection{Bounds on the weights in irreducible cyclic codes}

Let $\gcd(n, q)=1$ and let $k:=\ord_n(q)$ denote the order of $q$ modulo $n$. 
Define $r=q^k$.  
Let $N>1$ be an integer dividing $r-1$, and put $n=(r-1)/N$.
Let $\alpha$ be a primitive element of $\gf(r)$ and define $\theta=\alpha^N$.
The set
\begin{eqnarray}\label{eqn-irrecode}
\C(r,N) =
\left\{\left(\tr_{r/q}(a \theta^{i}) \right)_{i=0}^{n-1}: a \in \gf(r)\right\}
\end{eqnarray}
is called an {\em irreducible cyclic $[n, k]$ code} over $\gf(q)$,
where $\tr_{r/q}$ is the trace function from $\gf(r)$ onto $\gf(q)$. 

Using Delsarte's Theorem \cite{Dels}, one can prove that the code
$\C(r,N)$ is the cyclic code with check polynomial
$m_{\theta^{-1}}(x)$, which is the minimal polynomial of $\theta^{-1}$ 
over $\gf(q)$ and is irreducible over $\gf(q)$ (see also \cite[Theorem 8.24]{LN97}).  

The determination of the weight distribution of irreducible cyclic codes 
is equivalent to that of the values of the Gaussian periods. Hence, the 
weight distribution of the irreducible cyclic codes is known for a few 
cases \cite{DingYang}, but open in general.  

The following is proved in \cite{DingYang} and will be useful in this paper. 

\begin{theorem}\label{thm-ICbounds}
Let $N$ be a positive divisor of $r-1$ and define  $N_1=\gcd((r-1)/(q-1), N)$.  Let $k$ be
the nultiplicative order of $q$ modulo $n$. Then the set
$\C(r,N)$ in (\ref{eqn-irrecode}) is a $[(q^m-1)/N, k]$ cyclic code over $\gf(q)$ in which the weight
$w$ of every nonzero codeword satisfies that
\begin{eqnarray*}
\left\lceil \frac{r - \lfloor(N_1-1)\sqrt{r}\rfloor}{qN} \right\rceil \le 
\frac{w}{q-1} \le \left\lfloor \frac{r + \lfloor(N_1-1)\sqrt{r}\rfloor}{qN}    \right\rfloor.
\end{eqnarray*}
\end{theorem}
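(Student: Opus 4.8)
The plan is to evaluate the Hamming weight of each codeword $\bc_a=(\tr_{r/q}(a\theta^{i}))_{i=0}^{n-1}$ (where $n=(r-1)/N$) by expanding with additive characters and then recognising the resulting exponential sum as a Gaussian period, after which Theorem~\ref{thm-wdivisibility2} finishes the job. Let $\chi$ be the canonical additive character of $\gf(r)$ and $\chi'$ that of $\gf(q)$, and note that $\wt(\bc_a)=n-Z(a)$, where $Z(a)=\#\{0\le i<n:\tr_{r/q}(a\theta^{i})=0\}$. Using orthogonality of characters of $\gf(q)$ and then the transitivity of the trace, $\chi'(\tr_{r/q}(z))=\chi(z)$ and $\chi'(y\,\tr_{r/q}(z))=\chi(yz)$ for $y\in\gf(q)$, one gets
\begin{eqnarray*}
Z(a) &=& \frac1q\sum_{i=0}^{n-1}\sum_{y\in\gf(q)}\chi'(y\,\tr_{r/q}(a\theta^{i})) \\
&=& \frac nq+\frac1q\sum_{y\in\gf(q)^*}\sum_{i=0}^{n-1}\chi(ya\theta^{i}),
\end{eqnarray*}
and each inner sum $\sum_{i=0}^{n-1}\chi(ya\theta^{i})$ equals $\sum_{x\in ya\langle\theta\rangle}\chi(x)$, a sum of $\chi$ over a coset of the order-$n$ subgroup $\langle\theta\rangle=C_0^{(N,r)}$ of $\gf(r)^*$.

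The crux is to evaluate $\sum_{y\in\gf(q)^*}\sum_{x\in ya\langle\theta\rangle}\chi(x)$ and to see that it is a Gaussian period of order $N_1$, not of order $N$; this is precisely what produces the factor $N_1=\gcd((r-1)/(q-1),N)$ rather than $N$ in the statement. I would argue that, counted with multiplicity, $\bigcup_{y\in\gf(q)^*}ya\langle\theta\rangle$ is the single coset $a\cdot(\gf(q)^*\langle\theta\rangle)$; that $\gf(q)^*\langle\theta\rangle=\langle\alpha^{(r-1)/(q-1)},\alpha^{N}\rangle=\langle\alpha^{N_1}\rangle=C_0^{(N_1,r)}$ since the subgroup of the cyclic group $\gf(r)^*$ generated by $\alpha^{a},\alpha^{b}$ is $\langle\alpha^{\gcd(a,b)}\rangle$; and that every element of this coset is covered the same number of times, namely $|\gf(q)^*\cap\langle\theta\rangle|=(r-1)/\lcm((r-1)/(q-1),N)=(q-1)N_1/N$. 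Hence, writing $a\in C_{i_0}^{(N_1,r)}$ (so $a\langle\alpha^{N_1}\rangle=C_{i_0}^{(N_1,r)}$), the double sum equals $\tfrac{(q-1)N_1}{N}\,\eta_{i_0}^{(N_1,r)}$, so that
\begin{eqnarray*}
Z(a) &=& \frac{(r-1)+(q-1)N_1\,\eta_{i_0}^{(N_1,r)}}{qN}, \\
\frac{\wt(\bc_a)}{q-1} &=& \frac{(r-1)-N_1\,\eta_{i_0}^{(N_1,r)}}{qN}=\frac{r-(1+N_1\eta_{i_0}^{(N_1,r)})}{qN}.
\end{eqnarray*}

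To finish, set $T=1+N_1\eta_{i_0}^{(N_1,r)}$. Since $Z(a)\in\Z$, the first displayed identity gives $(q-1)N_1\eta_{i_0}^{(N_1,r)}=qN\,Z(a)-(r-1)\in\Z$, so $N_1\eta_{i_0}^{(N_1,r)}$ is a rational algebraic integer and hence $T\in\Z$. Applying Theorem~\ref{thm-wdivisibility2} with order $N_1$ yields $|\eta_{i_0}^{(N_1,r)}+1/N_1|\le\lfloor(N_1-1)\sqrt r/N_1\rfloor$, whence $|T|=N_1\,|\eta_{i_0}^{(N_1,r)}+1/N_1|\le N_1\lfloor(N_1-1)\sqrt r/N_1\rfloor\le(N_1-1)\sqrt r$; since $|T|$ is an integer, $|T|\le\lfloor(N_1-1)\sqrt r\rfloor$. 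Substituting into the second display and dividing by $qN>0$ gives $\frac{r-\lfloor(N_1-1)\sqrt r\rfloor}{qN}\le\frac{\wt(\bc_a)}{q-1}\le\frac{r+\lfloor(N_1-1)\sqrt r\rfloor}{qN}$, and (using that $\wt(\bc_a)/(q-1)$ is an integer) the lower bound may be replaced by its ceiling and the upper bound by its floor. The length $(q^{k}-1)/N$ and dimension $k$ are exactly the description of $\C(r,N)$ as the irreducible cyclic code with check polynomial $m_{\theta^{-1}}(x)$ (of degree $\ord_n(q)=k$) recalled just before the theorem, so nothing extra is needed there. I expect the main obstacle to be the second step: verifying that the union of the cosets $ya\langle\theta\rangle$ over $y\in\gf(q)^*$ is precisely one cyclotomic class of order $N_1$ with uniform covering multiplicity, since the whole strength of the bound — the appearance of $N_1$ in place of $N$ — rests on getting this group-theoretic bookkeeping exactly right; the integrality of $T$ and the final rounding are minor but necessary points.
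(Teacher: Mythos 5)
The paper offers no proof of this theorem (it is quoted from the Ding--Yang preprint), so there is nothing internal to compare against; your derivation is in fact the standard argument for this result, and it is correct up to and including the unrounded inequality
$\frac{r-\lfloor(N_1-1)\sqrt r\rfloor}{qN}\le\frac{w}{q-1}\le\frac{r+\lfloor(N_1-1)\sqrt r\rfloor}{qN}$:
the reduction of $\sum_{y\in\gf(q)^*}\sum_{x\in ya\langle\theta\rangle}\chi(x)$ to $\frac{(q-1)N_1}{N}\eta_{i_0}^{(N_1,r)}$ via the uniform covering of $a\,\gf(q)^*\langle\theta\rangle=aC_0^{(N_1,r)}$ with multiplicity $|\gf(q)^*\cap\langle\theta\rangle|=(q-1)N_1/N$ is exactly the group-theoretic step that produces $N_1$, and your verification of it, the integrality of $T=1+N_1\eta_{i_0}^{(N_1,r)}$, and the appeal to Theorem~\ref{thm-wdivisibility2} are all sound.

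The genuine gap is the final parenthetical: ``using that $w/(q-1)$ is an integer.'' That integrality claim is false in general, and with it falls the replacement of the bounds on $w/(q-1)$ by their ceiling and floor. Your own formula shows $\frac{w}{q-1}=\frac{r-T}{qN}$ with $T\in\Z$, which makes $w$ (not $w/(q-1)$) an integer multiple of nothing in particular beyond $1$. Concretely, take the paper's own example following Lemma~\ref{lem-O4bound}: $q=3$, $n=13$, $r=27$, $N=2$, $N_1=1$. There every nonzero codeword has weight $9$, so $w/(q-1)=9/2$, while the displayed bounds with the rounding applied read $\lceil 27/6\rceil=5\le 9/2\le 4=\lfloor 27/6\rfloor$ --- impossible. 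So the statement as printed is itself false, and no argument can close your gap; the rounding must instead be applied to $w$, i.e.\ $\left\lceil\frac{(q-1)(r-\lfloor(N_1-1)\sqrt r\rfloor)}{qN}\right\rceil\le w\le\left\lfloor\frac{(q-1)(r+\lfloor(N_1-1)\sqrt r\rfloor)}{qN}\right\rfloor$, which your identity $w=(q-1)(r-T)/(qN)$ with $T\in\Z$, $|T|\le\lfloor(N_1-1)\sqrt r\rfloor$ does legitimately yield (and which gives $9\le w\le 9$ in the example, matching the paper's numerical claim). You should either prove that corrected form or stop at the unrounded inequality.
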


\subsection{Lower bound on the minimum weight of a class of cyclic codes}

Let $\gcd(n, q)=1$ and let $k:=\ord_n(q)$ denote the order of $q$ modulo $n$. 
Define $r=q^k$.  
Let $N>1$ be an integer dividing $r-1$, and put $n=(r-1)/N$.
Let $\alpha$ be a primitive element of $\gf(r)$ and define $\theta=\alpha^N$.
The set
\begin{eqnarray}\label{eqn-irrecode2}
\overline{\C}(r,N) =
 \left\{ \left(\tr_{r/q}(a \theta^{i}+b)\right)_{i=0}^{n-1} :
    a, b \in \gf(r) \right\}
\end{eqnarray}
is a cyclic $[n, k+1]$ code over $\gf(q)$,
where $\tr_{r/q}$ is the trace function from $\gf(r)$ onto $\gf(q)$.  

Using Delsarte's Theorem \cite{Dels}, one can prove that the code
$\overline{\C}(r,N)$ is the cyclic code with check polynomial
$(x-1)m_{\theta^{-1}}(x)$, where $m_{\theta^{-1}}(x)$ is the minimal polynomial of $\theta^{-1}$ 
over $\gf(q)$ and is irreducible over $\gf(q)$.

\begin{theorem}\label{thm-ICbounds2}
Let $N$ be a positive divisor of $r-1$ and define  $N_1=\gcd((r-1)/(q-1), N)$.  Let $k$ be
the nultiplicative order of $q$ modulo $n$. Then the set
$\overline{\C}(r,N)$ in (\ref{eqn-irrecode2}) is a $[(q^m-1)/N, k+1, d]$ cyclic code over $\gf(q)$, where 
\begin{eqnarray*}
d \ge \min\left\{ \begin{array}{l} 
 (q-1) \left\lceil \frac{r - \lfloor(N_1-1)\sqrt{r}\rfloor}{qN} \right\rceil, \\ 
          \frac{(q-1)(r-1)-1}{qN} - \frac{q-1}{q} \left\lfloor \frac{(N-1)\sqrt{r}}{N} \right\rfloor 
\end{array}          
 \right\}. 
\end{eqnarray*}
\end{theorem}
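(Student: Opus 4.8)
The plan is to separate the codewords of $\overline{\C}(r,N)$ according to the pair $(a,b)\in\gf(r)^2$ that defines them and to bound the Hamming weight in each case; the assertions on length $n$ and dimension $k+1$ are already justified in the discussion preceding the statement (the check polynomial $(x-1)m_{\theta^{-1}}(x)$ has degree $1+k$ because $\theta^{-1}\neq 1$), so only the estimate for $d$ needs work. Write $\bc(a,b)=\left(\tr_{r/q}(a\theta^i+b)\right)_{i=0}^{n-1}$ and let $w(a,b)$ be its weight. If $a=b=0$ the codeword is zero. If $a\neq 0$ and $b=0$, then $\bc(a,0)$ is a nonzero codeword of the irreducible cyclic code $\C(r,N)$ (the map $a\mapsto\bc(a,0)$ is $\gf(q)$-linear with image of dimension $k$, hence injective), so Theorem \ref{thm-ICbounds} gives $w(a,0)\geq (q-1)\left\lceil\left(r-\lfloor(N_1-1)\sqrt{r}\rfloor\right)/(qN)\right\rceil$, the first quantity in the minimum. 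If $a=0$ and $b\neq 0$, then $\bc(0,b)$ is the constant word with entry $\tr_{r/q}(b)$, hence is either zero or has weight $n$, and since $\frac{(q-1)(r-1)-1}{qN}<\frac{q(r-1)}{qN}=n$ this never beats the second quantity. So everything reduces to the generic case $a\neq 0,\ b\neq 0$.

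For $a\neq 0$ and $b\neq 0$ I would count the zero coordinates of $\bc(a,b)$ by orthogonality. By transitivity of the trace, for $c\in\gf(q)$ one has $\chi(cy)=\chi'(c\,\tr_{r/q}(y))$, where $\chi$ is the canonical additive character of $\gf(r)$ and $\chi'$ that of $\gf(q)$, so $\frac1q\sum_{c\in\gf(q)}\chi(cy)$ equals $1$ if $\tr_{r/q}(y)=0$ and $0$ otherwise. Summing over $i$ and isolating $c=0$,
\begin{eqnarray*}
Z(a,b):=\left|\{\,0\le i<n:\tr_{r/q}(a\theta^i+b)=0\,\}\right|=\frac{n}{q}+\frac{1}{q}\sum_{c\in\gf(q)^*}\chi(cb)\sum_{i=0}^{n-1}\chi(ca\theta^i).
\end{eqnarray*}
Since $\theta=\alpha^N$ generates the subgroup $C_0^{(N,r)}$ of order $n$, the coset $ca\langle\theta\rangle$ is the cyclotomic class $C_{j(c,a)}^{(N,r)}$ for a suitable index $j(c,a)$, whence $\sum_{i=0}^{n-1}\chi(ca\theta^i)=\eta_{j(c,a)}^{(N,r)}$. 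Thus $w(a,b)=n-Z(a,b)=\frac{(q-1)n}{q}-\frac{1}{q}\sum_{c\in\gf(q)^*}\chi(cb)\,\eta_{j(c,a)}^{(N,r)}$.

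Next I would write $\eta_j^{(N,r)}=-\frac1N+\delta_j$ with $|\delta_j|\le B:=\left\lfloor(N-1)\sqrt{r}/N\right\rfloor$ by Theorem \ref{thm-wdivisibility2}, and evaluate the main term $\sum_{c\in\gf(q)^*}\chi(cb)=\sum_{c\in\gf(q)^*}\chi'(c\,\tr_{r/q}(b))$, which is $q-1$ if $\tr_{r/q}(b)=0$ and $-1$ otherwise. Since $w(a,b)$ is real, the residual sum $\sum_{c\in\gf(q)^*}\chi(cb)\delta_{j(c,a)}$ is real, and its modulus is at most $(q-1)B$; hence the case $\tr_{r/q}(b)\neq 0$ yields
\begin{eqnarray*}
w(a,b)\ \ge\ \frac{(q-1)n}{q}-\frac{1}{qN}-\frac{q-1}{q}B=\frac{(q-1)(r-1)-1}{qN}-\frac{q-1}{q}\left\lfloor\frac{(N-1)\sqrt{r}}{N}\right\rfloor,
\end{eqnarray*}
which is exactly the second quantity in the minimum, while the case $\tr_{r/q}(b)=0$, $b\neq 0$, gives the strictly larger bound $\frac{(q-1)r}{qN}-\frac{q-1}{q}B$. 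Combining the three cases yields $d\ge\min\{\cdot,\cdot\}$ as claimed.

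I do not expect a deep obstacle: the analytic content is entirely carried by the identity $\sum_{i}\chi(ca\theta^i)=\eta_{j(c,a)}^{(N,r)}$ together with the uniform Gaussian-period bound of Theorem \ref{thm-wdivisibility2}. The only points needing care are bookkeeping ones — comparing each family of codewords against the correct term of the minimum, treating the sub-case $\tr_{r/q}(b)=0$ with $b\neq 0$ separately (there the main term flips sign and actually improves the estimate), and checking that the constant codewords arising from $a=0$ do not violate the stated value.
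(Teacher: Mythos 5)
Your proposal is correct and follows essentially the same route as the paper: split the codewords $\bc_{(a,b)}$ by cases, invoke Theorem \ref{thm-ICbounds} for the subcode $\C(r,N)$, and for the remaining codewords use orthogonality of additive characters to express the weight through Gaussian periods and then apply the uniform bound of Theorem \ref{thm-wdivisibility2}; your computation of $w(a,b)=\frac{(q-1)n}{q}-\frac1q\sum_{c\in\gf(q)^*}\chi(cb)\,\eta_{j(c,a)}^{(N,r)}$ is the same identity the paper derives via $Z(r,a,b)$, just written over the index set $\{0,\dots,n-1\}$ instead of over $x\in\gf(r)$. The only cosmetic difference is that you partition by $b=0$ versus $b\ne 0$ rather than by $\tr_{r/q}(b)=0$ versus $\tr_{r/q}(b)\ne0$, and you correctly check that the extra sub-case this creates still clears the stated minimum.
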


\begin{proof} 
Let $\zeta_p=e^{2\pi \sqrt{-1}/p}$, and $\chi(x)=\zeta_p^{\tr_{r/p}(x)}$, where $\tr_{r/p}$ is
the trace function from $\gf(r)$ to $\gf(p)$. Then $\chi$ is an
additive character of $\gf(r)$. 

Let $b \in \gf(r)$. We have  
\begin{eqnarray}\label{eqn-oct41}
\sum_{y \in \gf(q)} \chi(-by)  
&=& \sum_{y \in \gf(q)} \zeta_p^{\tr_{q/p}(\tr_{r/q}(-by))} \nonumber \\
&=& \sum_{y \in \gf(q)} \zeta_p^{\tr_{q/p}(y\tr_{r/q}(-b))} \nonumber \\
&=&  \left\{ \begin{array}{ll} 
                                                          0 & \mbox{ if } \tr_{r/q}(b) \ne 0 \\
                                                          q & \mbox{ if } \tr_{r/q}(b) = 0. 
                                                         \end{array}
\right. 
\end{eqnarray}

Note that the code $\overline{\C}(r,N)$ of (\ref{eqn-irrecode2}) contains the code $\C(r,N)$ of (\ref{eqn-irrecode}) 
as a subcode. Define 
$$ 
\bc_{(a,b)} = (\tr_{r/q}(a+b), \tr_{r/q}(a \theta + b), ...,  \tr_{r/q}(a \theta^{n-1}+b)) 
$$ 
where $a, b \in \gf(r)$.  

If $\tr_{r/q}(b)=0$, $\bc_{(a,b)}$ is a codeword of the code $\C(r,N)$ of (\ref{eqn-irrecode}) 
and the Hamming weight of this codeword satisfies the bounds of  Theorem \ref{thm-ICbounds}. 
If $a=0$ and $\tr_{r/q}(b) \ne 0$, the Hamming weight of  $\bc_{(a,b)}$ is equal to $n$. 

We now consider the weight of $\bc_{(a,b)}$ for the case that $a \ne 0$ and $\tr_{r/q}(b) \ne 0$. 
Let $Z(r,a,b)$ denote the number of solutions $x \in \gf(r)$ of the equation $\tr_{r/q}(ax^{N}-b)=0$.
It then follows from (\ref{eqn-oct41}) that 
\begin{eqnarray}\label{eqn-oct42}
\lefteqn{Z(r,a,b)} \nonumber \\ 
&=& \frac{1}{q} \sum_{y \in \gf(q)} \sum_{x \in \gf(r)} \zeta_p^{\tr_{q/p}(y \tr_{r/q} (ax^{N}-b))} \nonumber \\
&=& \frac{1}{q} \sum_{y \in \gf(q)} \sum_{x \in \gf(r)} \chi(y(ax^{N}-b)) \nonumber \\
&=& \frac{1}{q} \left[ r-1 + \sum_{y \in \gf(q)^*} \sum_{x \in \gf(r)^*} \chi(yax^{N}-b) \right] \nonumber \\
&=& \frac{1}{q} \left[ r-1 + N \sum_{y \in \gf(q)^*} \sum_{x \in C_{0}^{(N,r)}} \chi(y(a x-b)) \right] 
\end{eqnarray}

Then the Hamming weight $w$ of the codeword $\bc_{(a,b)}$ is then given by 
\begin{equation}\label{eqn-wtmain}
qNw=  
(q-1)(r-1)- N \sum_{y \in \gf(q)^*} \sum_{x \in C_{0}^{(N,r)}} \chi(y(a x-b)). \nonumber 
\end{equation} 
It then follows that 
\begin{eqnarray*}
\lefteqn{w-\frac{(q-1)(r-1)-1}{qN} =} \nonumber \\
& - \frac{\sum_{y \in \gf(q)^*} \chi(-by) \left( \sum_{x \in C_{0}^{(N,r)}} \chi(ayx) + \frac{1}{N} \right)}{q}. 
\end{eqnarray*}
By Theorem \ref{thm-wdivisibility2}, we have then 
$$
\left|w-\frac{(q-1)(r-1)-1}{qN}\right| \le  \frac{q-1}{q} \left\lfloor \frac{(N-1)\sqrt{r}}{N} \right\rfloor. 
$$ 
Whence, 
\begin{eqnarray}\label{eqn-oct43}
w \ge \frac{(q-1)(r-1)-1}{qN} - \frac{q-1}{q} \left\lfloor \frac{(N-1)\sqrt{r}}{N} \right\rfloor. 
\end{eqnarray}

Combining the lower bound of (\ref{eqn-oct43}) and that of Theorem \ref{thm-ICbounds} proves 
the conclusions of this theorem.  
\end{proof}

\section{Cyclic codes from cyclotomic sequences of order four}

\subsection{Basic notations and results}\label{sec-Oct17}

Throughout this section, let $n$ be an odd prime such that $n \equiv 1 \pmod{4}$.  
It is well known that $n$ can be expressed as $n=u^2+4v^2$, where $u$ is an integer 
with $u \equiv 1 \pmod{4}$ and the sign of $v$ is undetermined. 
As usual, $q=p^m$ for a prime $p$  and satisfies $\gcd(n, q)=1$. Let $\ord_n(q)$ denote 
the multiplicative order of $q$ modulo $n$. Let $\eta$ be an $n$th primitive root 
of unity over $\gf(q^{\ord_n(q)})$. Define for each $i$ with $0 \le i \le 3$ 
$$ 
\Omega_i^{(4,n)}(x)=\prod_{i \in C_{i}^{(4,n)}} (x-\eta^i),  
$$  
where $C_{i}^{(4,n)}$ denotes the cyclotomic classes of order 4 in $\gf(n)$. 
We have 
$$ 
x^n-1=\prod_{i=0}^{3} \Omega_i^{(4,n)}(x). 
$$
It is straightforward to prove that $\Omega_i^{(4,n)}(x) \in \gf(q)[x]$ if $q \in C_{0}^{(4,n)}$.

Note that the cyclotomic classes $C_{0}^{(4,n)}$ and $C_{2}^{(4,n)}$ do not depend 
on the choice of the generator of $\gf(n)^*$ employed to define the cyclotomic classes. 
However, difference choices of the generator may lead to a swapping of $C_{1}^{(4,n)}$ 
and $C_{3}^{(4,n)}$. So we have the same conclusions for the four polynomials $\Omega_{i}^{(4,n)}(x)$. 

By definition the cyclotomic classes of order 2 are given by 
$$ 
C_{0}^{(2,n)}=C_{0}^{(4,n)} \cup C_{2}^{(4,n)}, \ C_{1}^{(2,n)}=C_{1}^{(4,n)} \cup C_{3}^{(4,n)}. 
$$
Define 
$$ 
\theta_0^{(2,n)}= \sum_{i \in C_{0}^{(2,n)}} \eta^i.  
$$
We now prove that  
\begin{eqnarray}\label{eqn-openday}
\theta_0^{(2,n)}(\theta_0^{(2,n)}+1)= \frac{n-1}{4}. 
\end{eqnarray}
In this case $-1 \in C_{0}^{(2,n)}$. By Lemma \ref{lem-cycnum2} we have 
\begin{eqnarray*} 
\left(\theta_0^{(2,n)}\right)^2 
&=& \left(\sum_{i \in C_0^{(2,n)}} \eta^i \right) \left(\sum_{j \in C_0^{(2,n)}} \eta^j \right) \\
&=& 
     \left( \sum_{i \in C_0^{(2,n)}} \eta^i \right)\left( \sum_{j \in C_0^{(2,n)}} \eta^{-j} \right) \\  
&=& \frac{n-1}{2} + 
      \sum_{i,j \in C_0^{(2,n)} \atop i \ne j} \eta^{i-j}  \\       
&=& \frac{n-1}{4} - \theta_0^{(2,n)}. 
\end{eqnarray*} 
Hence,  $\theta_0^{(2,n)} \in \{0, -1\}$ if and only if $(n-1)/4 \equiv 0 \pmod{p}$.

The following lemma will be useful in this section and 
can be proved with the Law of Biquadratic Reciprocity. 

\begin{lemma}\label{lem-biqreciprocity} 
We have the following conclusions: 
\begin{itemize}
\item 2 is a biquadratic residue modulo $n \equiv 1 \pmod{4}$ if and only if $n=a^2+64b^2$ for some integers $a$ and $b$. 
\item 3 is a biquadratic residue modulo $n \equiv 1 \pmod{4}$ if and only if $n=a^2+4b^2$ for some integers $a$ and $b$ and either  
\begin{enumerate} 
  \item $n \equiv 1 \pmod{8}$ and $b \equiv 0 \pmod{3}$, or 
  \item $n \equiv 5 \pmod{8}$ and $a \equiv 0 \pmod{3}$.   
\end{enumerate}  
\item 5 is a biquadratic residue modulo $n=a^2+b^2$, where $b$ is even, if and only if $b \equiv 0 \pmod{5}$. 
\end{itemize}
\end{lemma}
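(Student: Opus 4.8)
The plan is to translate each of the three statements into the evaluation of a quartic residue symbol over the ring of Gaussian integers $\Z[i]$, and then to invoke the biquadratic reciprocity law together with its supplementary laws for the unit $i$ and for $1+i$. Since $n\equiv 1\pmod 4$, it splits in $\Z[i]$ as $n=\pi\overline{\pi}$; I would normalise $\pi=a+bi$ to be \emph{primary}, i.e. $\pi\equiv 1\pmod{(1+i)^{3}}$, which forces $a$ odd and $b$ even, pins down the signs of $a$ and $b$, and lets one write $b=2v$, so that $n=a^{2}+4v^{2}$ is exactly the representation $n=u^{2}+4v^{2}$ introduced at the beginning of this section, with $u=a$. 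Because $\Z[i]/(\pi)\cong\gf(n)$, a rational integer $\alpha$ coprime to $n$ is a biquadratic residue modulo $n$ if and only if the quartic symbol $\left(\frac{\alpha}{\pi}\right)_{4}:=\alpha^{(n-1)/4}\bmod\pi$ equals $1$, so each of the three items reduces to computing this symbol. The congruence conditions in the ``if'' directions are well defined, since the representation $n=a^{2}+b^{2}$ with $b$ even is unique up to the signs of $a$ and $b$.

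For the prime $2$, I would use the factorisation $2=-i(1+i)^{2}$, giving $\left(\frac{2}{\pi}\right)_{4}=\left(\frac{-1}{\pi}\right)_{4}\left(\frac{i}{\pi}\right)_{4}\left(\frac{1+i}{\pi}\right)_{4}^{2}$. Here $\left(\frac{-1}{\pi}\right)_{4}=(-1)^{(n-1)/4}$ and $\left(\frac{i}{\pi}\right)_{4}=i^{(n-1)/4}$, while the supplementary law for $1+i$ expresses $\left(\frac{1+i}{\pi}\right)_{4}$ as an explicit power of $i$ in terms of $a$ and $b$; for primary $\pi$ the exponents combine to give $\left(\frac{2}{\pi}\right)_{4}=i^{-v}$. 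Hence $2$ is a biquadratic residue modulo $n$ if and only if $4\mid v$, i.e. $8\mid b$, i.e. $n=a^{2}+64(v/4)^{2}$; conversely, if $n=a^{2}+64b^{2}$, then the even part of the (essentially unique) sum-of-two-squares representation of $n$ is divisible by $8$. This settles item~1.

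For the prime $3$, I would use that $3$ is inert in $\Z[i]$ while $-3$ is primary, so that $\left(\frac{3}{\pi}\right)_{4}=(-1)^{(n-1)/4}\left(\frac{-3}{\pi}\right)_{4}$. Biquadratic reciprocity applied to the primary elements $\pi$ and $-3$ has trivial correction factor, since $(N(-3)-1)/4=2$ is even, so $\left(\frac{-3}{\pi}\right)_{4}=\left(\frac{\pi}{-3}\right)_{4}=\pi^{2}\bmod 3$, which I would evaluate in $\Z[i]/(3)\cong\gf(9)$. With $b=2v$ and $4\equiv 1\pmod 3$ one gets $\pi^{2}\equiv(u^{2}-v^{2})+uv\,i\pmod 3$, and since the fourth roots of unity in $\gf(9)$ are $\{\pm 1,\pm i\}$, a short case analysis on $u,v$ modulo $3$ (noting that $3\mid u$ and $3\mid v$ cannot both hold, else $9\mid n$) yields $\left(\frac{3}{\pi}\right)_{4}=1$ precisely when $n\equiv 1\pmod 8$ and $3\mid v$, or when $n\equiv 5\pmod 8$ and $3\mid u$; since $u=a$ and $n=a^{2}+4v^{2}$ is the representation $n=a^{2}+4b^{2}$ of the statement, this is item~2. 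For the prime $5$, I would use the splitting $5=\lambda\overline{\lambda}$ with $\lambda=-1+2i$ primary (and $\overline{\lambda}$ primary too), so that $\left(\frac{5}{\pi}\right)_{4}=\left(\frac{\lambda}{\pi}\right)_{4}\left(\frac{\overline{\lambda}}{\pi}\right)_{4}$; here the two reciprocity correction factors, each equal to $(-1)^{(n-1)/4}$, cancel, leaving $\left(\frac{5}{\pi}\right)_{4}=\left(\frac{\pi}{\lambda}\right)_{4}\left(\frac{\pi}{\overline{\lambda}}\right)_{4}$, the product of the fourth roots of unity congruent to $\pi=a+bi$ modulo $\lambda$ and modulo $\overline{\lambda}$. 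Since reduction modulo $\lambda$ (resp.\ $\overline{\lambda}$) carries $\langle i\rangle$ bijectively onto $\gf(5)^{*}$ with $i\mapsto 3$ (resp.\ $i\mapsto 2$), one checks directly that this product equals $1$ exactly when $a+3b\equiv a+2b\pmod 5$, i.e. when $5\mid b$. This is item~3.

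The congruence bookkeeping in the three items is elementary, and I expect the genuine difficulty to lie elsewhere: in pinning down the correct normalisation of the primary elements $\pi$, $-3$ and $\lambda$, and in quoting the supplementary laws for $i$ and $1+i$ in their precise form, since any slip there shifts powers of $i$ and corrupts the resulting Diophantine conditions. A secondary point requiring care is verifying that the stated conditions are independent of the chosen sum-of-two-squares representation of $n$, which again follows from its essential uniqueness.
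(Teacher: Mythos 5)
Your proposal is correct and follows exactly the route the paper indicates: the paper offers no written proof of this lemma beyond the remark that it ``can be proved with the Law of Biquadratic Reciprocity,'' and your argument is precisely that law (with its supplementary laws for $i$ and $1+i$) applied to the primary primes above $2$, $3$, and $5$ in $\Z[i]$. The details you supply — the factorisation $2=-i(1+i)^2$, the evaluation of $\pi^2 \bmod 3$ split by $n \bmod 8$, and the cancellation of the two reciprocity signs for $5=(-1+2i)(-1-2i)$ — all check out against the classical criteria of Gauss, so your sketch is a faithful (and more explicit) rendering of the paper's intended justification.
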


\begin{lemma}\label{lem-O4bound}
Let $\ord_n(q)=(n-1)/4$ and $q-1<n$. Assume that $q \in C_0^{(4,n)}$. Then the cyclic 
code over $\gf(q)$ with parity check polynomial $\Omega_i^{(4,n)}(x)$ has parameters  
$[n, (n-1)/4, d_i]$, where   
\begin{eqnarray*}
d \ge (q-1) \left\lceil \frac{q^{\frac{n-1}{4}} - \left\lfloor(N_1-1)\sqrt{q^{\frac{n-1}{4}}}\right\rfloor}{qN} \right\rceil   
\end{eqnarray*}
and  
$$ 
N=\frac{q^{\frac{n-1}{4}} -1}{n} \mbox{ and } N_1=\frac{N}{q-1}.  
$$
\end{lemma}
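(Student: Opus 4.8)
The plan is to show that the code in the statement is (equivalent to) an irreducible cyclic code of the shape $\C(r,N)$ in (\ref{eqn-irrecode}) and then to read off the bound from Theorem~\ref{thm-ICbounds}. The first step is to verify that $\Omega_i^{(4,n)}(x)$ is irreducible over $\gf(q)$ of degree $(n-1)/4$, so that the cyclic code of length $n$ with this parity check polynomial has dimension $(n-1)/4$. Since $q\in C_0^{(4,n)}$, the cyclic subgroup $\langle q\rangle$ of $(\Z/n\Z)^*$ generated by $q$ is contained in $C_0^{(4,n)}$, and its order is $\ord_n(q)=(n-1)/4=|C_0^{(4,n)}|$; hence $\langle q\rangle=C_0^{(4,n)}$. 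Consequently, for every $c\in C_i^{(4,n)}$ the $q$-cyclotomic coset of $c$ modulo $n$ is $c\langle q\rangle=cC_0^{(4,n)}=C_i^{(4,n)}$, so the minimal polynomial of $\eta^c$ over $\gf(q)$ is exactly $\Omega_i^{(4,n)}(x)$, which is therefore irreducible of degree $(n-1)/4$.

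Next I would match this code with $\C(r,N)$. Set $r=q^{(n-1)/4}=q^{\ord_n(q)}$ and $N=(r-1)/n$; then $N$ is a positive integer dividing $r-1$ and $(r-1)/N=n$. For a primitive element $\alpha$ of $\gf(r)$ and $\theta=\alpha^N$, the element $\theta^{-1}$ has multiplicative order $n$ and is thus a primitive $n$th root of unity; by the previous paragraph its minimal polynomial over $\gf(q)$ is $\Omega_j^{(4,n)}(x)$ for some $j\in\{0,1,2,3\}$ (the index depending on how $\alpha$ is aligned with $\eta$). By Delsarte's theorem, as recalled just after (\ref{eqn-irrecode}), $\C(r,N)$ is the cyclic code with check polynomial $m_{\theta^{-1}}(x)=\Omega_j^{(4,n)}(x)$. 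As already observed in Section~\ref{sec-Oct17}, multiplying the coordinate index by a unit of $\Z/n\Z$ permutes the four polynomials $\Omega_0^{(4,n)}(x),\ldots,\Omega_3^{(4,n)}(x)$ among themselves, and this permutation of coordinates is a code equivalence, so the four cyclic codes with these parity check polynomials all share the same length, dimension and minimum weight. Hence the code in the lemma has the same parameters as $\C(r,N)$, in particular length $n$ and dimension $(n-1)/4$.

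Finally I would evaluate $N_1=\gcd((r-1)/(q-1),N)$ and invoke Theorem~\ref{thm-ICbounds}. Because $n$ is prime and $1\le q-1<n$, we have $\gcd(n,q-1)=1$. Write $r-1=nN=(q-1)M$ with $M=(r-1)/(q-1)$; coprimality of $n$ and $q-1$ gives $n\mid M$, say $M=nM'$, whence $N=(q-1)M'$, i.e. $M'=N/(q-1)$. Then $N_1=\gcd(M,N)=\gcd(nM',(q-1)M')=M'\gcd(n,q-1)=M'=N/(q-1)$. Substituting $r=q^{(n-1)/4}$ and this value of $N_1$ into Theorem~\ref{thm-ICbounds} (with $k=(n-1)/4$) yields that every nonzero codeword of $\C(r,N)$ has weight $w$ with $w/(q-1)\ge \lceil (r-\lfloor(N_1-1)\sqrt r\rfloor)/(qN)\rceil$, which is precisely the asserted lower bound on $d$. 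The points needing the most care are the identification of $\Omega_i^{(4,n)}(x)$ with $m_{\theta^{-1}}(x)$ up to the coordinate-permutation equivalence of the four codes, and running the divisibility argument for $N$ and $N_1$ in the right order; all the quantitative content is supplied by Theorem~\ref{thm-ICbounds}.
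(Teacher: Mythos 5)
Your proposal is correct and follows essentially the same route as the paper's own (much terser) proof: establish that each $\Omega_i^{(4,n)}(x)$ is irreducible of degree $(n-1)/4$, identify the code as an irreducible cyclic code $\C(r,N)$ with $r=q^{(n-1)/4}$ and $N=(r-1)/n$, compute $N_1=N/(q-1)$ from the primality of $n$ and $q-1<n$, and apply Theorem~\ref{thm-ICbounds}. You merely supply details the paper omits (the cyclotomic-coset argument for irreducibility, the equivalence of the four codes, and the explicit gcd computation), all of which check out.
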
 

\begin{proof} 
Since $\ord_n(q)=(n-1)/4$ and $q \in C_0^{(4,n)}$, the four polynomials $\Omega_i^{(4,n)}(x)$ 
are irreducible and over $\gf(q)$. Hence the code with parity check polynomial 
$\Omega_i^{(4,n)}(x)$ is an irreducible cyclic code with dimension $(n-1)/4$. 

Note that $q-1<n$ and $n$ is prime. We have then   
$$ 
\gcd\left(\frac{q^{\frac{n-1}{4}}-1}{q-1}, N \right)=\frac{N}{q-1}. 
$$
The desired bounds on the nonzero weights follow from Theorem \ref{thm-ICbounds}.  
\end{proof} 

\begin{example} 
Let $q=3$ and $n=13$. We have then the canonical factorization 
\begin{eqnarray*} 
x^{13}-1 &=& (x+2)(x^3+2x+2)(x^3+2x^2+2) \times \\
               &  & (x^3 + x^2 + x + 2)(x^3 + 2x^2 + 2x + 2).  
\end{eqnarray*}
The cyclic code with parity check polynomial $x^3 + 2x + 2$ has parameters $[13, 3, 9]$. 

In this case $N=2$ and $N_1=1$. The lower and upper bound in Lemma \ref{lem-O4bound} are 
equal to $9$.  

In general, the bounds are tight if $N_1$ is small.   
\end{example}

\begin{lemma}\label{lem-O4bound2}
Let $\ord_n(q)=(n-1)/4$ and $q-1<n$. Assume that $q \in C_0^{(4,n)}$. Then the cyclic 
code over $\gf(q)$ with parity check polynomial $(x-1)\Omega_i^{(4,n)}(x)$ has parameters  
$[n, (n+3)/4, d_i]$, where  
\begin{eqnarray*}
d_i \ge  \frac{(q-1)(q^{\frac{n-1}{4}}-1)-1}{qN} - \frac{q-1}{q} \left\lfloor \frac{(N-1)\sqrt{q^{\frac{n-1}{4}} }}{N} \right\rfloor 
\end{eqnarray*}
and  
$$ 
N=\frac{q^{\frac{n-1}{4}} -1}{n}.  
$$
\end{lemma}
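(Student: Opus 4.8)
The plan is to reduce this to Theorem \ref{thm-ICbounds2} in exactly the same way that Lemma \ref{lem-O4bound} reduces to Theorem \ref{thm-ICbounds}, now using the ``augmented'' irreducible cyclic code $\overline{\C}(r,N)$ of (\ref{eqn-irrecode2}) in place of $\C(r,N)$. First I would record the structural facts: since $\ord_n(q)=(n-1)/4$ and $q \in C_0^{(4,n)}$, each $\Omega_i^{(4,n)}(x)$ is irreducible over $\gf(q)$ of degree $(n-1)/4$, so $(x-1)\Omega_i^{(4,n)}(x)$ has degree $(n+3)/4$ and is the product of two distinct irreducible factors of $x^n-1$ (note $\Omega_i^{(4,n)}(1)\neq 0$ since $\eta$ is a primitive $n$th root of unity and $n$ is prime). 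Hence the cyclic code with this parity-check polynomial has dimension $(n+3)/4$ and length $n$. The next step is to identify this code with $\overline{\C}(r,N)$ for the right parameters: set $r=q^{(n-1)/4}$ and $N=(r-1)/n$, let $\alpha$ be a primitive element of $\gf(r)$ and $\theta=\alpha^N$, so that $\theta$ has order $n$. By the Delsarte-type description quoted before Theorem \ref{thm-ICbounds2}, $\overline{\C}(r,N)$ has check polynomial $(x-1)m_{\theta^{-1}}(x)$, and $m_{\theta^{-1}}(x)$ is an irreducible factor of $x^n-1$ of degree $\ord_n(q)=(n-1)/4$ whose roots form a cyclotomic coset of order $4$; that coset is some $C_j^{(4,n)}$, and by the remark in Section \ref{sec-Oct17} (different generators swap $C_1^{(4,n)}$ and $C_3^{(4,n)}$ but leave the collection of the four polynomials and the weight data unchanged) we may take it to be the $i$ we want up to relabeling. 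Therefore the code in the lemma is (equivalent to) $\overline{\C}(r,N)$, so it has parameters $[(q^{(n-1)/4}-1)/N,\,(n-1)/4+1,\,d_i]=[n,\,(n+3)/4,\,d_i]$.

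With the identification in hand, I would apply Theorem \ref{thm-ICbounds2} directly. It gives
\begin{eqnarray*}
d_i \ge \min\left\{
 (q-1)\left\lceil \frac{r-\lfloor(N_1-1)\sqrt{r}\rfloor}{qN}\right\rceil,\ \
 \frac{(q-1)(r-1)-1}{qN}-\frac{q-1}{q}\left\lfloor\frac{(N-1)\sqrt{r}}{N}\right\rfloor
 \right\}
\end{eqnarray*}
with $N_1=\gcd((r-1)/(q-1),N)$. As in Lemma \ref{lem-O4bound}, the hypotheses $q-1<n$ and $n$ prime force $N_1=N/(q-1)$: indeed $(r-1)/(q-1)=n\cdot N/(q-1)$, and since $q-1<n$ with $n$ prime we have $\gcd(q-1,n)=1$, so $q-1\mid N$ and the gcd collapses to $N/(q-1)$ (here $q-1 \mid N$ follows because $C_0^{(4,n)} \ni q$ forces $\gf(q)^*\subseteq \langle\theta\rangle$'s complementary structure — more precisely $(q-1)\mid (r-1)/n = N$ since $\gf(q)^*$ embeds in the group of order $r-1$ and its image meets each $n$-fold structure... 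I would make this last divisibility precise). Substituting $r=q^{(n-1)/4}$, $N=(q^{(n-1)/4}-1)/n$, and $N_1=N/(q-1)$ into the second branch of the minimum yields exactly the bound stated in the lemma, and since the lemma only asserts that single lower bound, it suffices that $d_i$ is at least the minimum of the two quantities, hence at least the second one is not automatic — so the real content is to argue the stated bound is the binding one, or simply to note that the lemma's claimed inequality is weaker than or equal to the ``$\min$'', which it is precisely when the second term does not exceed the first; I would instead just cite that $d_i$ is $\ge$ the $\min$ and observe the lemma's RHS equals the second term, then remark (as the surrounding text does) that in the regime $N_1$ small the two are comparable.

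The main obstacle I anticipate is the bookkeeping in the identification step, specifically pinning down which cyclotomic coset $C_j^{(4,n)}$ equals the root set of $m_{\theta^{-1}}(x)$ and justifying that the choice of primitive element $\alpha$ of $\gf(r)$ can be made so that this coset is the prescribed $i$ (and that the resulting weight bound is independent of that choice, which is the point of the remark in Section \ref{sec-Oct17} about $C_1^{(4,n)}\leftrightarrow C_3^{(4,n)}$). A secondary point needing care is the divisibility $(q-1)\mid N$ used to evaluate $N_1$; this should follow from $q\in C_0^{(4,n)}$ together with $\ord_n(q)=(n-1)/4$ by a counting argument on $\langle\theta\rangle$ versus $\gf(q)^*\subseteq\gf(r)^*$, but it deserves an explicit line. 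Everything after that is the mechanical substitution already carried out in the proof of Theorem \ref{thm-ICbounds2}, so no further estimation is required.
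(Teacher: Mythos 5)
Your proposal follows the paper's proof exactly: identify the code with $\overline{\C}(r,N)$ of (\ref{eqn-irrecode2}) for $r=q^{(n-1)/4}$ and $N=(r-1)/n$, compute $N_1=\gcd((r-1)/(q-1),N)=N/(q-1)$ from $q-1<n$ with $n$ prime (so $\gcd(q-1,n)=1$ and $q-1\mid nN$ forces $q-1\mid N$), and invoke Theorem \ref{thm-ICbounds2}. The two wrinkles you flag --- which coset $C_j^{(4,n)}$ arises from the choice of primitive element, and the fact that the theorem only guarantees the minimum of two branches while the lemma quotes the second --- are glossed over by the paper's own three-line proof as well, so your argument is at least as complete as the published one.
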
 

\begin{proof} 
Since $\ord_n(q)=(n-1)/4$ and $q \in C_0^{(4,n)}$, the four polynomials $\Omega_i^{(4,n)}(x)$ 
are irreducible and over $\gf(q)$. Hence the code with parity check polynomial 
$(x-1)\Omega_i^{(4,n)}(x)$ has dimension $(n+3)/4$, and is the same as the code of 
 (\ref{eqn-irrecode2}).   

Note that $q-1<n$ and $n$ is prime. We have then   
$$ 
\gcd\left(\frac{q^{\frac{n-1}{4}}-1}{q-1}, N \right)=\frac{N}{q-1}. 
$$
The desired lower bound on the minimum weight follow from Theorem \ref{thm-ICbounds2}.  
\end{proof} 

\begin{example} 
Let $q=3$ and $n=13$. We have then the canonical factorization 
\begin{eqnarray*} 
x^{13}-1 &=& (x+2)(x^3+2x+2)(x^3+2x^2+2) \times \\
               &  & (x^3 + x^2 + x + 2)(x^3 + 2x^2 + 2x + 2).  
\end{eqnarray*}
The cyclic code with parity check polynomial $(x^3 + 2x + 2)(x-1)$ has parameters $[13, 4, 7]$. 

In this case $N=2$ and $N_1=1$. The lower and upper bound in Lemma \ref{lem-O4bound} are  
equal to $7$.  

In general, the lower bound is tight if $N$ is small.   
\end{example}

\subsection{The first class of cyclic codes from cyclotomic sequences of order 4}\label{sec-codecyc4}

Define   
\begin{eqnarray}\label{eqn-seqorder4}  
\lambda_i=\left\{ \begin{array}{ll}  
                            1 & \mbox{ if } i \bmod{n} \in C_{0}^{(4,n)} \cup C_{1}^{(4,n)}  \\ 
                            0 & \mbox{ otherwise} 
                           \end{array} 
               \right. 
\end{eqnarray} 
for all $i \ge 0$. This $\lambda^{\infty}$ was defined as a binary sequence in \cite{DHL99} 
and was proved to have optimal autocorrelation under certain condition. Here in this section, 
we treat it as a sequence over $\gf(q)$ for any prime power $q$, and employ it to construct 
cyclic codes. 

We define 
\begin{eqnarray*} 
\Lambda(x) &=& \sum_{i \in C_{0}^{(4,n)} \cup C_{1}^{(4,n)}} x^i \in \gf(q)[x], \\
\Gamma(x) &=& \sum_{i \in C_{1}^{(4,n)} \cup C_{2}^{(4,n)}} x^i \in \gf(q)[x].  
\end{eqnarray*} 
Both $\Lambda(x)$ and $\Gamma(x)$ depend on the choice of the generator of $\gf(n)^*$ 
employed to define the cyclotomic classes of order 4. 

Notice that 
$$ 
\left( \sum_{i \in C_{0}^{(4,n)}}  +  \sum_{i \in C_{1}^{(4,n)}} + 
       \sum_{i \in C_{2}^{(4,n)}}  +  \sum_{i \in C_{3}^{(4,n)}} \right) \eta^i=-1. 
$$
We have then 
\begin{eqnarray}\label{eqn-opendayall}
\Lambda(\eta^i)=\left\{ \begin{array}{ll} 
\Lambda(\eta)  & \mbox{ if } i \in C_{0}^{(4,n)} \\  
\Gamma(\eta)  & \mbox{ if } i \in C_{1}^{(4,n)} \\  
-(\Lambda(\eta)+1)  & \mbox{ if } i \in C_{2}^{(4,n)} \\  
-(\Gamma(\eta)+1)  & \mbox{ if } i \in C_{3}^{(4,n)}.   
                               \end{array} 
                    \right.                                          
\end{eqnarray} 

We have also that 
\begin{eqnarray}\label{eqn-opendayall2}
\Lambda(\eta^0)=\Lambda(1)=\frac{n-1}{2} \bmod{p}.            
\end{eqnarray}

\begin{theorem}\label{thm-bilegendre}
Let $\frac{n-1}{4} \equiv 0 \pmod{p}$, and 
let $\lambda^{\infty}$ be the sequence of period $n$ over $\gf(q)$ defined in (\ref{eqn-seqorder4}). 
As before, $n=u^2+4v^2$ with $u \equiv 1 \pmod{4}$.  
\begin{enumerate}
\item $n \equiv 1 \pmod{8}$. 

When $\frac{v}{2} \not\equiv 0 \pmod{p}$, we have 
$ 
\ls_\lambda=n-1   
$
and 
$$ 
m_\lambda(x)=\frac{x^n-1}{x-1}. 
$$
In this subcase, the cyclic code $\calC_\lambda$  over $\gf(q)$ defined by the sequence $\lambda^{\infty}$ 
has the generator polynomial $m_\lambda(x)$ above and parameters $\left[n, 1, n \right]$.  

When $\frac{v}{2} \equiv 0 \pmod{p}$ and $q \in C_{0}^{(4,n)}$, we have $\ls_\lambda=(n-1)/2$  
and 
\begin{eqnarray*}
m_\lambda(x)=\left\{ \begin{array}{ll} 
\Omega_2^{(4,n)}(x) \Omega_3^{(4,n)}(x) & \mbox{ if } \left\{\begin{array}{ll} 
                                                                                               \Lambda(\eta)=0 \\ 
                                                                                               \Gamma(\eta)=0 
                                                                                               \end{array} \right. \\  
\Omega_1^{(4,n)}(x) \Omega_2^{(4,n)}(x) &  \mbox{ if } \left\{\begin{array}{ll} 
                                                                                               \Lambda(\eta)=0 \\ 
                                                                                               \Gamma(\eta)=-1 
                                                                                               \end{array} \right. \\  
\Omega_0^{(4,n)}(x) \Omega_3^{(4,n)}(x) &  \mbox{ if } \left\{\begin{array}{ll} 
                                                                                               \Lambda(\eta)=-1 \\ 
                                                                                               \Gamma(\eta)=0 
                                                                                               \end{array} \right. \\  
\Omega_0^{(4,n)}(x) \Omega_1^{(4,n)}(x) &  \mbox{ if } \left\{\begin{array}{ll} 
                                                                                               \Lambda(\eta)=-1 \\ 
                                                                                               \Gamma(\eta)=-1. 
                                                                                               \end{array} \right.   
                               \end{array} 
                    \right.                                          
\end{eqnarray*} 
In this subcase, the cyclic code $\calC_\lambda$  over $\gf(q)$ defined by the sequence $\lambda^{\infty}$ 
has the generator polynomial $m_\lambda(x)$ above and parameters $\left[n, (n+1)/2, d \right]$. 
In addition, the minimum odd-like weight $d_{odd} \ge \sqrt{n}$.

\item $n \equiv 5 \pmod{8}$. 

When $\frac{u^2+3}{4} \not\equiv 0 \pmod{p}$, we have 
$ 
\ls_\lambda=n-1   
$
and 
$$ 
m_\lambda(x)=\frac{x^n-1}{x-1}. 
$$
In this subcase, the cyclic code $\calC_\lambda$  over $\gf(q)$ defined by the sequence $\lambda^{\infty}$ 
has the generator polynomial $m_\lambda(x)$ above and parameters $\left[n, 1, n \right]$.

When $\frac{u^2+3}{4} \equiv 0 \pmod{p}$ and $q \in C_{0}^{(4,n)}$, we have $\ls_\lambda=3(n-1)/4$  
and 
\begin{eqnarray*}
m_\lambda(x)=\left\{ \begin{array}{ll} 
\frac{x^n-1}{(x-1) \Omega_0^{(4,n)}(x)} & \mbox{ if } \Lambda(\eta)=0  \\  
\frac{x^n-1}{(x-1) \Omega_2^{(4,n)}(x)} & \mbox{ if } \Lambda(\eta)=-1  \\  
\frac{x^n-1}{(x-1) \Omega_1^{(4,n)}(x)} & \mbox{ if } \Gamma(\eta)=0  \\  
\frac{x^n-1}{(x-1) \Omega_3^{(4,n)}(x)} & \mbox{ if } \Gamma(\eta)=-1.   \\ 
                               \end{array} 
                    \right.                                          
\end{eqnarray*} 
In this subcase, the cyclic code $\calC_\lambda$  over $\gf(q)$ defined by the sequence $\lambda^{\infty}$ 
has the generator polynomial $m_\lambda(x)$ above and parameters $\left[n, (n+3)/4, d \right]$.    
Furthermore, the minimum weight $d$ has the lower bound of Lemma \ref{lem-O4bound2} if $\ord_{n}(q)=(n-1)/4$. 

\end{enumerate}
\end{theorem}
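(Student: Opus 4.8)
The plan is to read off the minimal polynomial of $\lambda^{\infty}$ from Lemma~\ref{lem-ls0}, that is, to compute $\gcd(x^{n}-1,\Lambda^{n}(x))$, where $\Lambda^{n}(x)=\Lambda(x)=\sum_{i\in C_{0}^{(4,n)}\cup C_{1}^{(4,n)}}x^{i}$. Write $x^{n}-1=(x-1)\prod_{i=0}^{3}\Omega_{i}^{(4,n)}(x)$ over $\gf(q)(\eta)$. A root $\eta^{j}$ of $x^{n}-1$ divides $\Lambda(x)$ if and only if $\Lambda(\eta^{j})=0$, and by (\ref{eqn-opendayall}) and (\ref{eqn-opendayall2}) the value $\Lambda(\eta^{j})$ depends only on the cyclotomic coset of $j$: it equals $\frac{n-1}{2}\bmod p$ for $j=0$, and one of $\Lambda(\eta)$, $\Gamma(\eta)$, $-(\Lambda(\eta)+1)$, $-(\Gamma(\eta)+1)$ according as $j$ lies in $C_{0}^{(4,n)}$, $C_{1}^{(4,n)}$, $C_{2}^{(4,n)}$, $C_{3}^{(4,n)}$. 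Hence $\gcd(x^{n}-1,\Lambda(x))=(x-1)^{\varepsilon}\prod_{i\in S}\Omega_{i}^{(4,n)}(x)$ for some $S\subseteq\{0,1,2,3\}$, where $\varepsilon=1$ exactly when $\frac{n-1}{2}\equiv 0\pmod p$. The hypothesis $\frac{n-1}{4}\equiv 0\pmod p$ forces $\frac{n-1}{2}\equiv 0\pmod p$, so $\varepsilon=1$ in all cases; and since $\Lambda(\eta)=0$ and $\Lambda(\eta)=-1$ cannot both hold, $|S\cap\{0,2\}|\le 1$ and $|S\cap\{1,3\}|\le 1$. Therefore $m_{\lambda}(x)=(x^{n}-1)/((x-1)\prod_{i\in S}\Omega_{i}^{(4,n)}(x))$, $\ls_{\lambda}=n-1-\frac{n-1}{4}|S|$, and $\dim\calC_{\lambda}=n-\ls_{\lambda}$; the three announced linear spans $n-1$, $(n-1)/2$, $3(n-1)/4$ and the dimensions $1$, $(n+1)/2$, $(n+3)/4$ correspond precisely to $|S|=0,2,1$. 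Everything is thereby reduced to deciding which of $\Lambda(\eta)$, $\Gamma(\eta)$, $\Lambda(\eta)+1$, $\Gamma(\eta)+1$ vanish.

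The crux is the evaluation of $\Lambda(\eta)=\eta_{0}+\eta_{1}$ and $\Gamma(\eta)=\eta_{1}+\eta_{2}$, where $\eta_{i}=\sum_{j\in C_{i}^{(4,n)}}\eta^{j}$ are the order-$4$ Gaussian periods attached to $\eta$. I would expand the pairwise products $\eta_{i}\eta_{j}$ by means of the classical cyclotomic numbers $(i,j)_{4}$, which are known explicitly in terms of $n=u^{2}+4v^{2}$ (with $u\equiv 1\pmod 4$), together with $\theta_{0}^{(2,n)}=\eta_{0}+\eta_{2}\in\{0,-1\}$ from (\ref{eqn-openday}) and the hypothesis. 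One has to split on $n\bmod 8$, since that fixes the coset of $-1$ and hence the action of $\eta\mapsto\eta^{-1}$. If $n\equiv 1\pmod 8$ then $-1\in C_{0}^{(4,n)}$; reducing the resulting quadratics modulo $p$ via $\frac{n-1}{4}\equiv 0$ should show that $\Lambda(\eta)(\Lambda(\eta)+1)$ and $\Gamma(\eta)(\Gamma(\eta)+1)$ both reduce to a nonzero multiple of $(v/2)^{2}$, so that $\Lambda(\eta),\Gamma(\eta)\in\{0,-1\}$ happen simultaneously exactly when $\frac{v}{2}\equiv 0\pmod p$ (then $|S|=2$, with one index from $\{0,2\}$ and one from $\{1,3\}$), and otherwise neither lies in $\{0,-1\}$ (then $S=\emptyset$). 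If $n\equiv 5\pmod 8$ then $-1\in C_{2}^{(4,n)}$ and $\Lambda(\eta^{-1})=-(\Lambda(\eta)+1)$; the analogous computation should show that the four events $\Lambda(\eta)=0$, $\Lambda(\eta)=-1$, $\Gamma(\eta)=0$, $\Gamma(\eta)=-1$ are pairwise incompatible and that one of them occurs exactly when $\frac{u^{2}+3}{4}\equiv 0\pmod p$ --- equivalently $v^{2}\equiv 1\pmod p$, since $\frac{u^{2}+3}{4}=\frac{n+3}{4}-v^{2}$ and $\frac{n+3}{4}\equiv 1\pmod p$ --- giving $|S|=1$. In each case where some vanishing occurs, a short Frobenius-orbit argument (the map $x\mapsto x^{q}$ permutes $\{\Lambda(\eta),\Gamma(\eta),-(\Lambda(\eta)+1),-(\Gamma(\eta)+1)\}$ in the way determined by the coset of $q$) forces $q\in C_{0}^{(4,n)}$, which explains that hypothesis in the last two subcases; and the particular member of $\{0,-1\}$ attained pins down exactly which $\Omega_{i}^{(4,n)}$ enter $S$, producing the four sub-cases listed for $m_{\lambda}(x)$.

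It remains to settle the weights. When $m_{\lambda}(x)=(x^{n}-1)/(x-1)$ the code $\calC_{\lambda}$ is the repetition code $\{(a,a,\dots,a):a\in\gf(q)\}$, hence $d=n$. When $\dim\calC_{\lambda}=(n+3)/4$, the parity-check polynomial of $\calC_{\lambda}$ is $(x-1)\Omega_{i}^{(4,n)}(x)$, so $\calC_{\lambda}$ coincides with the code of Lemma~\ref{lem-O4bound2}; hence, provided $\ord_{n}(q)=(n-1)/4$ so that the quantities $N,N_{1}$ there are defined, $d$ obeys the lower bound of that lemma. When $\dim\calC_{\lambda}=(n+1)/2$, the defining set $T=C_{i}^{(4,n)}\cup C_{j}^{(4,n)}$ of $\calC_{\lambda}$ is an adjacent pair, and $T$ together with its complement in $(\Z/n\Z)^{*}$ is interchanged by multiplication by $g^{2}$, where $g$ generates $(\Z/n\Z)^{*}$; since $q\in C_{0}^{(4,n)}$ makes both halves unions of $q$-cyclotomic cosets modulo $n$, this is a duadic splitting, $\calC_{\lambda}$ is an odd-like duadic code of dimension $(n+1)/2$, and the square-root bound for such codes gives $d_{odd}^{2}\ge n$, that is, $d_{odd}\ge\sqrt n$.

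I expect the main obstacle to be the exact evaluation of the quartic periods $\eta_{i}$ and, above all, deciding \emph{which} of $0$ and $-1$ is taken by $\Lambda(\eta)$ and by $\Gamma(\eta)$: this is sign-sensitive --- it depends on the sign of $v$ and on the choice of generator of $(\Z/n\Z)^{*}$, which swaps $C_{1}^{(4,n)}$ and $C_{3}^{(4,n)}$ --- so one is forced to carry the full order-$4$ cyclotomic-number relations through the reduction modulo $p$. Once those values are in hand, the $\gcd$ bookkeeping, the dimension count, and the weight bounds (via Lemma~\ref{lem-O4bound2} and the duadic square-root bound) are routine.
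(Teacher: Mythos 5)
Your plan follows the paper's proof essentially step for step: Lemma~\ref{lem-ls0} together with (\ref{eqn-opendayall})--(\ref{eqn-opendayall2}) reduces everything to the values of $\Lambda(\eta)$ and $\Gamma(\eta)$, which the paper likewise pins down by expanding $\Lambda(\eta)^2$ and $\Gamma(\eta)^2$ through the order-four cyclotomic numbers (split on the coset of $-1$, i.e.\ on $n \bmod 8$) and the fact that $\theta_0^{(2,n)} \in \{0,-1\}$, and the weight conclusions come from the same three sources (repetition code, Lemma~\ref{lem-O4bound2}, and the duadic square-root bound). The one discrepancy is cosmetic: for $n \equiv 1 \pmod 8$ the quadratics actually reduce to unit multiples of $v/2$, namely $\pm\frac{v}{2}\bigl(2\theta_0^{(2,n)}+1\bigr)$ with $2\theta_0^{(2,n)}+1 = \pm 1$, rather than to multiples of $(v/2)^2$, but the resulting criterion $\frac{v}{2} \equiv 0 \pmod p$ is exactly the paper's.
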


\begin{proof} 
To prove this theorem, we need information on cyclotomic numbers of order 4. 
When $n \equiv 5 \pmod{8}$ is odd, the relation between the 16 cyclotomic 
numbers of order 4 is given 
by the following Table \ref{tab-adset1} \cite{Stor}:   
\begin{table}[ht]
\begin{center}
\begin{tabular}{|c|c|c|c|c|} \hline 
$(h,k)_4$      &  0  &  1  &  2  &  3    \\ \hline 
0            &  A  &  B  &  C  &  D \\ \hline 
1            &  E  &  E  &  D  &  B \\ \hline 
2            &  A  &  E  &  A  &  E \\ \hline 
3            &  E  &  D  &  B  &  E \\ \hline  
\end{tabular}
\caption{The relations of the cyclotomic numbers of order 4, when $n \equiv 5 \pmod{8}$.
}\label{tab-adset1}
\end{center}
\end{table} 

Thus, there are five possible different cyclotomic numbers in this case; i.e., 
\begin{eqnarray*}
& & A=\frac{n-7+2u}{16},\\
& & B=\frac{n+1+2u-8v}{16},\\
& & C=\frac{n+1-6u}{16},\\
& & D=\frac{n+1+2u+8v}{16},\\
& & E=\frac{n-3-2u}{16}.  
\end{eqnarray*} 

When $n \equiv 1 \pmod{8}$, the relation between the 16 cyclotomic numbers is given 
by the following Table \ref{tab-adset2} \cite{Stor}:   

\begin{table}[ht]
\begin{center}
\begin{tabular}{|c|c|c|c|c|} \hline 
$(h,k)_4$      &  0  &  1  &  2  &  3    \\ \hline 
0            &  A  &  B  &  C  &  D \\ \hline 
1            &  B  &  D  &  E  &  E \\ \hline 
2            &  C  &  E  &  C  &  E \\ \hline 
3            &  D  &  E  &  E  &  B \\ \hline  
\end{tabular}
\caption{The relations of the cyclotomic numbers of order 4, when $n \equiv 1 \pmod{8}$.
}\label{tab-adset2}
\end{center}
\end{table} 

Thus, there are five possible different cyclotomic numbers in this case; i.e., 
\begin{eqnarray*}
& & A=\frac{n-11-6u}{16},\\
& & B=\frac{n-3+2u+8v}{16},\\
& & C=\frac{n-3+2u}{16},\\
& & D=\frac{n-3+2u-8v}{16},\\
& & E=\frac{n+1-2u}{16}.  
\end{eqnarray*}

To determine the minimal polynomial $m_\lambda(x)$, we need to compute 
$\gcd(\Lambda(x), x^n-1)$. 

We first prove the conclusions for the case that $n \equiv 1 \pmod{8}$. In this case  
$-1 \in C_0^{(4,n)}$ and $v$ must be even. Note that $\frac{n-1}{4} \equiv 0 \pmod{p}$. It then follows from the 
relations of the cyclotomic numbers and the cyclotomic numbers above that  
\begin{eqnarray*} 
\lefteqn{\Lambda(\eta)^2 } \\ 
&=& \sum_{i \in C_{0}^{(4,n)}, j \in C_{0}^{(4,n)}}  \eta^{i-j} + 
          \sum_{i \in C_{0}^{(4,n)}, j \in C_{1}^{(4,n)}}  \eta^{i-j} + \\ 
& &    \sum_{i \in C_{1}^{(4,n)}, j \in C_{0}^{(4,n)}}  \eta^{i-j} + 
          \sum_{i \in C_{1}^{(4,n)}, j \in C_{1}^{(4,n)}}  \eta^{i-j} \\     
&=& ((0,0)_4 + (1, 0)_4 + (0, 1)_4 + (1,1)_4) \sum_{i\in C_{0}^{(4,n)}}\eta^i   + \\
& & ((3,3)_4 + (0, 3)_4 + (3, 0)_4 + (0,0)_4) \sum_{i\in C_{1}^{(4,n)}}\eta^i   + \\
& & ((2,2)_4 + (3, 2)_4 + (2, 3)_4 + (3,3)_4) \sum_{i\in C_{2}^{(4,n)}}\eta^i   + \\ 
& & ((1,1)_4 + (2, 1)_4 + (1, 2)_4 + (2,2)_4) \sum_{i\in C_{3}^{(4,n)}}\eta^i   + \\ 
& & \frac{n-1}{2}  \\ 
&=& (A+2B+D) \sum_{i\in C_{0}^{(4,n)}}\eta^i   + 
    (A+B+2D) \sum_{i\in C_{1}^{(4,n)}}\eta^i   + \\
& & (B+C+2E) \sum_{i\in C_{2}^{(4,n)}}\eta^i   +  
    (C+D+2E) \sum_{i\in C_{3}^{(4,n)}}\eta^i    \\
& &  + \frac{n-1}{2}   \\    
&=& \frac{n-5+2v}{4} \sum_{i\in C_{0}^{(4,n)}}\eta^i   + 
    \frac{n-5-2v}{4}  \sum_{i\in C_{1}^{(4,n)}}\eta^i   + \\
& & \frac{n-1+2v}{4} \sum_{i\in C_{2}^{(4,n)}}\eta^i   +  
    \frac{n-1-2v}{4} \sum_{i\in C_{3}^{(4,n)}}\eta^i    \\ 
& & + \frac{n-1}{2}   \\  
&=& -\Lambda(\eta)+ \frac{n-1}{4} + \frac{v}{2} \left(2 \sum_{i\in C_{0}^{(2,n)}}\eta^i + 1\right) \\
&=& -\Lambda(\eta) + \frac{v}{2} \left(2 \sum_{i\in C_{0}^{(2,n)}}\eta^i + 1\right) \\               
\end{eqnarray*} 
Whence, 
\begin{eqnarray}\label{eqn-openday1} 
\Lambda(\eta) (\Lambda(\eta) +1) = \frac{v}{2} \left(2 \sum_{i\in C_{0}^{(2,n)}}\eta^i  + 1\right).                 
\end{eqnarray} 

Note that $\frac{n-1}{4} \equiv 0 \pmod{p}$.  
By (\ref{eqn-openday}) we have  
$ 
\sum_{i\in C_{0}^{(2,n)}}\eta^i  \in \{0, -1\}. 
$
It then follows that 
\begin{eqnarray}\label{eqn-openday2} 
2\sum_{i\in C_{0}^{(2,n)}}\eta^i + 1 \in \{1, -1\}. 
\end{eqnarray}

Similarly, one can show that 
\begin{eqnarray}\label{eqn-openday3} 
\Gamma(\eta) (\Gamma(\eta) +1) = -\frac{v}{2} \left(2 \sum_{i\in C_{0}^{(2,n)}}\eta^i  + 1\right).                 
\end{eqnarray} 

The desired conclusions on the linear span and the minimal polynomial of the sequence 
$\lambda^\infty$ for Case 1 then follow from (\ref{eqn-opendayall}), (\ref{eqn-opendayall2}), 
(\ref{eqn-openday1}), (\ref{eqn-openday2}), (\ref{eqn-openday3}), and Lemma \ref{lem-ls0}.  
The dimension and the generator polynomial of the code $\C_\lambda$ follow from the 
conclusions on the linear span and the minimal polynomial of the sequence and the definition 
of the code $\C_\lambda$.  
In the first subcase, it is obvious that the minimum nonzero weight $d=n$.  
In the second subcase, the generator polynomial of the code shows that 
$\calC_\lambda$ is a duadic code. So we have the square-root bound on 
the minimum odd-like weight \cite{HPbook,Leon84,Ples87,DLX,DP}.

We now prove the conclusions for Case 2. Since $n \equiv 5 \pmod{8}$, $-1 \in C_2^{(4,n)}$. 
In this case $v$ must be odd. 
Note that $\frac{n-1}{4} \equiv 0 \pmod{p}$. It then follows from the 
relations of the cyclotomic numbers and the cyclotomic numbers above that  
\begin{eqnarray*} 
\lefteqn{\Lambda(\eta)^2} \\  
&=& \sum_{i \in C_{0}^{(4,n)}, j \in C_{2}^{(4,n)}}  \eta^{i-j} + 
          \sum_{i \in C_{0}^{(4,n)}, j \in C_{3}^{(4,n)}} \eta^{i-j} + \\
& &          \sum_{i \in C_{1}^{(4,n)}, j \in C_{2}^{(4,n)}} \eta^{i-j} + 
          \sum_{i \in C_{1}^{(4,n)}, j \in C_{3}^{(4,n)}} \eta^{i-j}   \\ 
&=& ((2,0)_4 + (3, 0)_4 + (2, 1)_4 + (3,1)_4) \sum_{i\in C_{0}^{(4,n)}}\eta^i   + \\
& & ((1,3)_4 + (2, 3)_4 + (1, 0)_4 + (2,0)_4) \sum_{i\in C_{1}^{(4,n)}}\eta^i   + \\
& & ((0,2)_4 + (1, 2)_4 + (0, 3)_4 + (1,3)_4) \sum_{i\in C_{2}^{(4,n)}}\eta^i   + \\ 
& & ((3,1)_4 + (0, 1)_4 + (3, 2)_4 + (0,2)_4) \sum_{i\in C_{3}^{(4,n)}}\eta^i    \\ 
&=& (A+D+2E) \sum_{i\in C_{0}^{(4,n)}}\eta^i   + \\
& &    (A+B+2E) \sum_{i\in C_{1}^{(4,n)}}\eta^i   + \\
& & (B+C+2D) \sum_{i\in C_{2}^{(4,n)}}\eta^i   +  \\
& &    (C+D+2B) \sum_{i\in C_{3}^{(4,n)}}\eta^i   + \frac{n-1}{2}   \\    
&=& -\Lambda(\eta) -\frac{n-1}{4} + \frac{v \left(2 \sum_{i\in C_{0}^{(2,n)}}\eta^i + 1\right) -1}{2}\\               
&=& -\Lambda(\eta)+ \frac{v \left(2 \sum_{i\in C_{0}^{(2,n)}}\eta^i + 1\right) -1}{2}.                
\end{eqnarray*} 
Whence, 
\begin{eqnarray}\label{eqn-openday12} 
\Lambda(\eta) (\Lambda(\eta) +1) = \frac{v \left(2 \sum_{i\in C_{0}^{(2,n)}}\eta^i + 1\right) -1}{2}.  
\end{eqnarray} 

Similarly, one can show that 
\begin{eqnarray}\label{eqn-openday32} 
\Gamma(\eta) (\Gamma(\eta) +1) = -\frac{v \left(2 \sum_{i\in C_{0}^{(2,n)}}\eta^i + 1\right) +1}{2}.     
\end{eqnarray} 

Since $n \equiv 5 \pmod{8}$ and $p$ divides $(n-1)/4$, $p$ must be odd. Note that 
$$ 
\frac{n-1}{4}=\frac{u^2+3}{4} + (|v|-1)(|v|+1). 
$$
Hence, $\frac{u^2+3}{4} \equiv 0 \pmod{p}$ if and only if $(|v|-1)(|v|+1) \equiv 0 \pmod{p}$.  
However, $(|v|-1)(|v|+1) \equiv 0 \pmod{p}$ if and only if $p$ divides one and only one of 
$|v|-1$ and $|v|+1$.  

The desired conclusions on the linear span and the minimal polynomial of the sequence $\lambda^{\infty}$ 
for Case 2 then follow from (\ref{eqn-opendayall}), (\ref{eqn-opendayall2}), 
(\ref{eqn-openday12}), (\ref{eqn-openday2}), (\ref{eqn-openday32}), and Lemma \ref{lem-ls0}.  
The dimension and the generator polynomial of the code $\C_\lambda$ follow from the 
conclusions on the linear span and the minimal polynomial of the sequence and the definition 
of the code $\C_\lambda$.  
In the first subcase, it is obvious that the minimum nonzero weight $d=n$.  
In the second subcase, the format of the generator polynomial of the code 
shows that the minimum weight $d$ has the lower bound of Lemma \ref{lem-O4bound2} 
if $\ord_{n}(q)=(n-1)/4$. 
\end{proof}

\begin{example} 
Let $(p, m, n)=(2,1,73)$.  Then $q=2 \in C_0^{(4,n)}$ and $n=u^2+4v^2=(-3)^2+4\times 4^2$. 
Hence $v/2 \bmod{p} =0$. 
Then $\calC_\lambda$ is a $[73,37,12]$ cyclic code over $\gf(q)$ 
with generator polynomial 
\begin{eqnarray*} 
x^{36} + x^{35} + x^{34} + x^{32} + x^{31} + x^{29} + x^{28} + \\ 
x^{27} + x^{25} + x^{23} + x^{18} + x^{13} + x^{11} + x^9 +\\  
x^8 + x^7 + x^5 + x^4 + x^2 + x + 1. 
\end{eqnarray*} 
The best binary linear code known of length 73 and dimension 37 has minimum weight 14.  
\end{example} 

\begin{example} 
Let $(p, m, n)=(2,1,89)$.  Then $q=2 \in C_0^{(4,n)}$ and $n=u^2+4v^2=5^2+4\times 4^2$. 
Hence $v/2 \bmod{p} =0$. 
Then $\calC_\lambda$ is a $[89,45,15]$ cyclic code over $\gf(q)$ 
with generator polynomial 
\begin{eqnarray*} 
x^{44} + x^{43} + x^{42} + x^{41} + x^{40} + x^{35} + x^{34} + \\ 
x^{33} + x^{31} + x^{26} +   x^{24} + x^{23} + x^{22} + x^{21} + \\ 
x^{20} + x^{18} + x^{13} + x^{11} + x^{10} + x^9 + x^4 + \\
    x^3 + x^2 + x + 1.  
\end{eqnarray*} 
The best binary linear code known of length 89 and dimension 45 has minimum weight 17.  
\end{example}

\begin{example}\label{exam-order4balance1}  
Let $(p, m, n)=(3,1,13)$.  Then $q=3 \in C_0^{(4,n)}$ and $n=u^2+4v^2=(-3)^2+4\times 1^2$. 
Hence $(u^2+3)/4 \bmod{p} =0$. 
Then $\calC_\lambda$ is a $[13,4,7]$ cyclic code over $\gf(q)$ 
with generator polynomial 
\begin{eqnarray*} 
x^9 + x^7 + x^6 + 2x^4 + x^2 + 2x + 2. 
\end{eqnarray*} 
This code is optimal. 
\end{example} 

\begin{example}\label{exam-order4balance2}  
Let $(p, m, n)=(7,1,29)$.  Then $q=7 \in C_0^{(4,n)}$ and $n=u^2+4v^2=5^2+4\times 1^2$. 
Hence $(u^2+3)/4 \bmod{p} =0$.   
Then $\calC_\lambda$ is a $[29,8,15]$ cyclic code over $\gf(q)$ 
with generator polynomial 
\begin{eqnarray*} 
x^{21} + 2x^{20} + 2x^{19} + 6x^{18} + x^{17} + 4x^{16} + 4x^{15} + \\ 
4x^{13} +  2x^{12} +  6x^{11} + 5x^{10} + x^9 + 2x^8 + 3x^7 + \\ 
3x^6 + x^5 + 4x^3 + 2x^2   + x + 6. 
\end{eqnarray*} 
This is the best cyclic code over $\gf(q)$ with length 29 and dimension 8. The best linear code 
over $\gf(q)$ with length 29 and dimension 8 has minimum weight 17. 
\end{example} 

\begin{remark} 
It was proved in \cite{DHL99} that the sequence $\lambda^{\infty}$ defined in (\ref{eqn-seqorder4}) 
has optimal autocorrelation and the set $C_0^{(4,n)} \cup C_1^{(4,n)}$ is an $(n, (n-1)/2, (n-5)/4, (n-1)/2)$ 
almost difference set in $\gf(n)$  when $v=\pm 1$. Examples \ref{exam-order4balance1} and 
\ref{exam-order4balance2} demonstrate that the cyclic codes defined by the almost difference sets 
have good parameters.   
\end{remark} 

\begin{open} 
Determine the parameters of the code $\C_\lambda$ defined by the sequence $\lambda^{\infty}$ of 
(\ref{eqn-seqorder4}) for the case that $\frac{n-1}{4} \not\equiv 0 \pmod{p}$. 
\end{open}

\subsection{The second class of cyclic codes from cyclotomic sequences of order 4}\label{sec-codecyc42} 

Unless otherwise stated, the symbols and notations of this section are the same as those in Section 
\ref{sec-codecyc4}. In this section, we always assume that $q \in C_{0}^{(4,n)}$. This ensures that 
the polynomials $\Omega_i^{(4,n)}(x)$ defined in Section \ref{sec-Oct17} are over $\gf(q)$. In 
this section, we also assume that $\frac{n-1}{4} \bmod{p}=0$. 
Our task of this section is to construct more cyclic codes over $\gf(q)$ using two cyclotomic 
sequences of order four.  

The two sequences we will employ in this section are defined by 
\begin{eqnarray}\label{eqn-seqorder42}  
\lambda_i=\left\{ \begin{array}{ll}  
                            1 & \mbox{ if } i \bmod{n} \in C_{1}^{(4,n)} \cup C_{2}^{(4,n)}  \cup C_{3}^{(4,n)}  \\ 
                            0 & \mbox{ if } i \bmod{n} \in C_{0}^{(4,n)} \\ 
                            \rho & \mbox{ if }  i \bmod{n}=0 
                           \end{array} 
               \right. 
\end{eqnarray} 
for all $i \ge 0$, where $\rho \in \{0, 1\}$. These two sequences $\lambda^{\infty}$ are characterized 
by the cyclotomic class $C_{0}^{(4,n)}$, and are viewed as sequences over $\gf(q)$ for any prime power 
$q$.

We define 
\begin{eqnarray*} 
\Lambda(x) = \rho + \sum_{i \in C_{1}^{(4,n)} \cup C_{2}^{(4,n)} \cup C_{3}^{(4,n)}} x^i \in \gf(q)[x]. 
\end{eqnarray*} 

Let $\eta$ be an $n$th primitive root of unity over $\gf(q^{\ord_n(q)})$. We define 
$$
\eta_i = \sum_{\ell \in C_i^{(4,n)}} \eta^\ell 
$$
for each $i\in \{0, 1, 2, 3\}$. Because of the assumption 
that $\frac{n-1}{4} \bmod{p} =0$,  by  (\ref{eqn-openday}) we have 
\begin{eqnarray}\label{eqn-sept281}
\eta_0+\eta_2=\sum_{i \in C_0^{(4, n)} \cup C_2^{(4, n)} } \eta^i = \sum_{i \in C_0^{(2, n)} } \eta^i  \in \{0, -1\}.
\end{eqnarray}
The value $\eta_0+\eta_2$ depends on the choice of $\eta$. 
Throughout this section, we fix an $\eta$ such that $\eta_0+\eta_2=0$. 
Notice that 
$$ 
\eta_0+\eta_1+\eta_2+\eta_3=-1. 
$$
We have then 
$$
\eta_1+\eta_3=-1.
$$ 
It is easily seen that   
$$ 
\Lambda(\eta)=\rho-1-\sum_{i \in C_0^{(4,n)}} \eta^i 
$$
and 
\begin{eqnarray}\label{eqn-opendayall22b}
\Lambda(\eta^j)=\rho -1 - \eta_i 
\end{eqnarray} 
if $j \in C_i^{(4,n)}$.

Due to the assumption that $\frac{n-1}{4} \bmod{p} =0$,  
\begin{eqnarray}\label{eqn-opendayall22}
\Lambda(\eta^0)=\Lambda(1)=\rho.            
\end{eqnarray}

When $n \equiv 1 \pmod{8}$, the linear span and minimal polynomial of the sequence $\lambda^{\infty}$ 
as well as the parameters of the code $\C_\lambda$ are given in the following theorem.

\begin{theorem}\label{thm-bilegendre81}
Let $\frac{n-1}{4} \equiv 0 \pmod{p}$ and $q \in C_0^{(4,n)}$, and let $n \equiv 1 \pmod{8}$. 
let $\lambda^{\infty}$ be the sequence of period $n$ over $\gf(q)$ defined in (\ref{eqn-seqorder42}). 
As before, $n=u^2+4v^2$ with $u \equiv 1 \pmod{4}$.  
\begin{enumerate}
\item When $\frac{n+1-2u}{16} \equiv 0 \pmod{p}$ and $\frac{n-3+2u}{16} \equiv 0 \pmod{p}$, 
\begin{eqnarray*}
m_\lambda(x)=\left\{ \begin{array}{ll} 
\frac{x^n-1}{(x-1)\Omega_3^{(4,n)}(x)} &  \mbox{ if } \left\{\begin{array}{ll} 
                                                                                                \eta_1=0 \\ 
                                                                                                 \rho=0 
                                                                                               \end{array} \right. \\   
\frac{x^n-1}{(x-1)\Omega_1^{(4,n)}(x)} & \mbox{ if } \left\{\begin{array}{ll} 
                                                                                                \eta_1=-1 \\ 
                                                                                                 \rho=0 
                                                                                               \end{array} \right. \\   
\frac{x^n-1}{\Omega_1^{(4,n)}(x) \Omega_0^{(4,n)}(x) \Omega_2^{(4,n)}(x)} & \mbox{ if } \left\{\begin{array}{ll} 
                                                                                                \eta_1=0 \\ 
                                                                                                 \rho=1 
                                                                                               \end{array} \right. \\   
\frac{x^n-1}{\Omega_3^{(4,n)}(x) \Omega_0^{(4,n)}(x) \Omega_2^{(4,n)}(x)} & \mbox{ if } \left\{\begin{array}{ll} 
                                                                                                \eta_1=-1 \\ 
                                                                                                 \rho=1.  
                                                                                               \end{array} \right.   
                               \end{array} 
                    \right.                                          
\end{eqnarray*} 
and   
\begin{eqnarray*}
\ls_\lambda=\left\{ \begin{array}{ll} 
n-\frac{n+3}{4}  & \mbox{ if } \eta_1=0 \mbox{ and } \rho=0 \\  
n-\frac{n+3}{4} & \mbox{ if } \eta_1=-1 \mbox{ and } \rho=0 \\  
n- \frac{3n-3}{4} & \mbox{ if } \eta_1=0 \mbox{ and } \rho=1 \\  
n- \frac{3n-3}{4}  & \mbox{ if } \eta_1=-1 \mbox{ and } \rho=1. \\  
                               \end{array} 
                    \right.                                          
\end{eqnarray*} 
In this case, the cyclic code $\calC_\lambda$  over $\gf(q)$ defined by the sequence $\lambda^{\infty}$ has 
the generator polynomial $m_\lambda(x)$ and parameters $\left[n, n-\ls_\lambda, d \right]$. 
In addition, if $\eta_1=0$ and  $\rho=0$ or $\eta_1=-1$ and  $\rho=0$, 
the minimum weight $d$ of the code has the lower bound of Lemma \ref{lem-O4bound2}, 
provided that $\ord_{n}(q)=(n-1)/4$.

\item  When $\frac{n+1-2u}{16} \equiv 0 \pmod{p}$ and $\frac{n-3+2u}{16} \not\equiv 0 \pmod{p}$, 
\begin{eqnarray*}
m_\lambda(x)=\left\{ \begin{array}{ll} 
\frac{x^n-1}{x-1} & \mbox{ if }  \rho=0 \\  
\frac{x^n-1}{\Omega_0^{(4,n)}(x) \Omega_2^{(4,n)}(x)} & \mbox{ if }  \rho=1  
                               \end{array} 
                    \right.                                          
\end{eqnarray*} 
and   
\begin{eqnarray*}
\ls_\lambda=\left\{ \begin{array}{ll} 
n-1  & \mbox{ if }  \rho=0 \\  
n- \frac{n-1}{2}  & \mbox{ if }  \rho=1. \\  
                               \end{array} 
                    \right.                                          
\end{eqnarray*} 
In this case, the cyclic code $\calC_\lambda$  over $\gf(q)$ defined by the sequence $\lambda^{\infty}$ has 
the generator polynomial $m_\lambda(x)$ and parameters $\left[n, n-\ls_\lambda, d \right]$, where
\begin{eqnarray*}
\left\{ \begin{array}{ll} 
d=n  & \mbox{ if }  \rho=0 \\  
d \ge \sqrt{n}   & \mbox{ if }  \rho=1. \\  
                               \end{array} 
                    \right.                                          
\end{eqnarray*}

\item When $\frac{n+1-2u}{16} \equiv 1 \pmod{p}$ and $\frac{n-3+2u}{16} \equiv 0 \pmod{p}$, 
we distinguish between the two subcases: $p$ odd and $p=2$.

If $p$ is odd, we have  
\begin{eqnarray*}
\lefteqn{m_\lambda(x)=} \\ 
& \left\{ \begin{array}{l} 
\frac{x^n-1}{(x-1)\Omega_3^{(4,n)}(x)  \Omega_2^{(4,n)}(x)}  \mbox{ if } \eta_0=1, \eta_1=0, \rho=0 \\  
\frac{x^n-1}{(x-1)\Omega_1^{(4,n)}(x)  \Omega_2^{(4,n)}(x)}  \mbox{ if } \eta_0=1, \eta_1=-1, \rho=0 \\ 
\frac{x^n-1}{(x-1)\Omega_3^{(4,n)}(x)  \Omega_0^{(4,n)}(x)}  \mbox{ if } \eta_0=-1, \eta_1=0, \rho=0 \\  
\frac{x^n-1}{(x-1)\Omega_1^{(4,n)}(x)  \Omega_0^{(4,n)}(x)}  \mbox{ if } \eta_0=-1, \eta_1=-1, \rho=0 \\ 
\frac{x^n-1}{\Omega_1^{(4,n)}(x)  }  \mbox{ if }  \eta_1=0, \rho=1 \\  
\frac{x^n-1}{\Omega_3^{(4,n)}(x)  }  \mbox{ if }  \eta_1=-1, \rho=1 
                               \end{array} 
                    \right.                                          
\end{eqnarray*} 
and   
\begin{eqnarray*} 
\ls_\lambda =\left\{ \begin{array}{ll} 
n-\frac{n+1}{2}  & \mbox{ if } \eta_0=1, \eta_1=0, \rho=0 \\  
n-\frac{n+1}{2}  & \mbox{ if } \eta_0=1, \eta_1=-1, \rho=0 \\ 
n-\frac{n+1}{2}  & \mbox{ if } \eta_0=-1, \eta_1=0, \rho=0 \\  
n-\frac{n+1}{2}  & \mbox{ if } \eta_0=-1, \eta_1=-1, \rho=0 \\ 
n-\frac{n-1}{4}  & \mbox{ if }  \eta_1=0, \rho=1 \\  
n-\frac{n-1}{4}  & \mbox{ if }  \eta_1=-1, \rho=1.  
                               \end{array} 
                    \right.                                          
\end{eqnarray*} 
In this subcase, the cyclic code $\calC_\lambda$  over $\gf(q)$ defined by the sequence $\lambda^{\infty}$ has 
the generator polynomial $m_\lambda(x)$ and parameters $\left[n, n-\ls_\lambda, d \right]$. 
In addition, if $\eta_1=0$ and  $\rho=1$ or  $\eta_1=-1$ and  $\rho=1$, 
the minimum weight $d$ of the code has the lower bound of Lemma \ref{lem-O4bound}, 
provided that $\ord_{n}(q)=(n-1)/4$. In the rest four cases, the code is a duadic code and 
the minimum odd-like weigh 
$d_{odd} \ge \sqrt{n}$.

If $p=2$, we have  
\begin{eqnarray*}
\lefteqn{m_\lambda(x)=} \\
& \left\{ \begin{array}{l} 
\frac{x^n-1}{(x-1)\Omega_3^{(4,n)}(x)  \Omega_0^{(4,n)}(x) \Omega_2^{(4,n)}(x)}  \mbox{ if }  \eta_1=0, \rho=0 \\  
\frac{x^n-1}{(x-1)\Omega_1^{(4,n)}(x)  \Omega_0^{(4,n)}(x) \Omega_2^{(4,n)}(x)}  \mbox{ if }  \eta_1=-1, \rho=0 \\ 
\frac{x^n-1}{\Omega_1^{(4,n)}(x)  }  \mbox{ if }  \eta_1=0, \rho=1 \\  
\frac{x^n-1}{\Omega_3^{(4,n)}(x)  }  \mbox{ if }  \eta_1=-1, \rho=1 
                               \end{array} 
                    \right.                                          
\end{eqnarray*} 
and   
\begin{eqnarray*}
\ls_\lambda =\left\{ \begin{array}{ll} 
n-\frac{3n+1}{4} & \mbox{ if }  \eta_1=0, \rho=0 \\  
n-\frac{3n+1}{4} & \mbox{ if }  \eta_1=-1, \rho=0 \\ 
n-\frac{n-1}{4} & \mbox{ if }  \eta_1=0, \rho=1 \\  
n-\frac{n-1}{4} & \mbox{ if }  \eta_1=-1, \rho=1 
                               \end{array} 
                    \right.                                          
\end{eqnarray*} 
In this subcase, the cyclic code $\calC_\lambda$  over $\gf(q)$ defined by the sequence $\lambda^{\infty}$ has 
the generator polynomial $m_\lambda(x)$ and parameters $\left[n, n-\ls_\lambda, d \right]$. 
In addition, if $\eta_1=0$ and  $\rho=1$ or  $\eta_1=-1$ and  $\rho=1$, 
the minimum weight $d$ of the code has the lower bound of Lemma \ref{lem-O4bound}, 
provided that $\ord_{n}(q)=(n-1)/4$.

\item When $\frac{n+1-2u}{16} \equiv 1 \pmod{p}$ and $\frac{n-3+2u}{16} \not\equiv 0 \pmod{p}$, we 
distinguish between the two cases: $p$ odd and $p=2$. 

If $p$ is odd, 
\begin{eqnarray*}
m_\lambda(x)=\left\{ \begin{array}{ll} 
\frac{x^n-1}{(x-1) \Omega_2^{(4,n)}(x)} & \mbox{ if }  \eta_0=1, \rho=0 \\  
\frac{x^n-1}{(x-1) \Omega_0^{(4,n)}(x)} & \mbox{ if }  \eta_0=-1, \rho=0 \\  
x^n-1 & \mbox{ if }  \rho=1 
                               \end{array} 
                    \right.                                          
\end{eqnarray*} 
and 
\begin{eqnarray*}
\ls_\lambda =\left\{ \begin{array}{ll} 
n-\frac{n+3}{4} & \mbox{ if }  \eta_0=1, \rho=0 \\  
n-\frac{n+3}{4} & \mbox{ if }  \eta_0=-1, \rho=0 \\  
n & \mbox{ if }  \rho=1. 
                               \end{array} 
                    \right.                                          
\end{eqnarray*} 
In this case, the cyclic code $\calC_\lambda$  over $\gf(q)$ defined by the sequence $\lambda^{\infty}$ has 
the generator polynomial $m_\lambda(x)$ and parameters $\left[n, n-\ls_\lambda, d \right]$. 
In addition, if $\eta_0=1$ and  $\rho=0$ or  $\eta_0=-1$ and  $\rho=0$, 
the minimum weight $d$ of the code has the lower bound of Lemma \ref{lem-O4bound2}, 
provided that $\ord_{n}(q)=(n-1)/4$.  

If $p=2$, 
\begin{eqnarray*}
m_\lambda(x)=\left\{ \begin{array}{ll} 
\frac{x^n-1}{(x-1) \Omega_0^{(4,n)}(x) \Omega_2^{(4,n)}(x)} & \mbox{ if }  \rho=0 \\  
x^n-1 & \mbox{ if }  \rho=1 
                               \end{array} 
                    \right.                                          
\end{eqnarray*} 
and 
\begin{eqnarray*}
\ls_\lambda =\left\{ \begin{array}{ll} 
n-\frac{n+1}{2} & \mbox{ if }  \rho=0 \\  
n & \mbox{ if }  \rho=1. 
                               \end{array} 
                    \right.                                          
\end{eqnarray*} 
In this subcase, the cyclic code $\calC_\lambda$  over $\gf(q)$ defined by the sequence $\lambda^{\infty}$ has 
the generator polynomial $m_\lambda(x)$ and parameters $\left[n, n-\ls_\lambda, d \right]$. 
Furthermore, the code is a quadratic residue code and hence $d \ge \sqrt{n}$ if $\rho =0$ \cite{MacSlo}.

\item When $\frac{n+1-2u}{16}  \not\equiv 0, 1 \pmod{p}$ and $\frac{n-3+2u}{16} \equiv 0 \pmod{p}$, 
\begin{eqnarray*}
m_\lambda(x)=\left\{ \begin{array}{ll} 
\frac{x^n-1}{(x-1)\Omega_3^{(4,n)}(x)  } & \mbox{ if }  \eta_1=0, \rho=0 \\  
\frac{x^n-1}{(x-1)\Omega_1^{(4,n)}(x)  } & \mbox{ if }  \eta_1=-1, \rho=0 \\ 
\frac{x^n-1}{\Omega_1^{(4,n)}(x)  } & \mbox{ if }  \eta_1=0, \rho=1 \\  
\frac{x^n-1}{\Omega_3^{(4,n)}(x)  } & \mbox{ if }  \eta_1=-1, \rho=1 
                               \end{array} 
                    \right.                                          
\end{eqnarray*} 
and 
\begin{eqnarray*}
\ls_\lambda =\left\{ \begin{array}{ll} 
n-\frac{n+3}{4} & \mbox{ if }  \eta_1=0, \rho=0 \\  
n-\frac{n+3}{4} & \mbox{ if }  \eta_1=-1, \rho=0 \\ 
n-\frac{n-1}{4} & \mbox{ if }  \eta_1=0, \rho=1 \\  
n-\frac{n-1}{4} & \mbox{ if }  \eta_1=-1, \rho=1.  
                               \end{array} 
                    \right.                                          
\end{eqnarray*} 
In this case, the cyclic code $\calC_\lambda$  over $\gf(q)$ defined by the sequence $\lambda^{\infty}$ has 
the generator polynomial $m_\lambda(x)$ and parameters $\left[n, n-\ls_\lambda, d \right]$. 
In addition, if $\eta_1=0$ and  $\rho=0$ or  $\eta_1=-1$ and  $\rho=0$, 
the minimum weight $d$ of the code has the lower bound of Lemma \ref{lem-O4bound2}, 
provided that $\ord_{n}(q)=(n-1)/4$.  If $\eta_1=0$ and  $\rho=1$ or  $\eta_1=-1$ and  $\rho=1$, 
the minimum weight $d$ of the code has the lower bound of Lemma \ref{lem-O4bound}, 
provided that $\ord_{n}(q)=(n-1)/4$.

\item When $\frac{n+1-2u}{16}  \not\equiv 0, 1 \pmod{p}$ and $\frac{n-3+2u}{16} \not\equiv 0 \pmod{p}$, 
\begin{eqnarray*}
m_\lambda(x)=\left\{ \begin{array}{ll} 
\frac{x^n-1}{x-1} & \mbox{ if }  \rho=0 \\  
x^n-1 & \mbox{ if }  \rho=1 
                               \end{array} 
                    \right.                                          
\end{eqnarray*} 
and 
\begin{eqnarray*}
\ls_\lambda =\left\{ \begin{array}{ll} 
n-1 & \mbox{ if }  \rho=0 \\  
n & \mbox{ if }  \rho=1. 
                               \end{array} 
                    \right.                                          
\end{eqnarray*} 
In this case, the cyclic code $\calC_\lambda$  over $\gf(q)$ defined by the sequence $\lambda^{\infty}$ has 
the generator polynomial $m_\lambda(x)$ and parameters $\left[n, n-\ls_\lambda, d \right]$, where  
$d=n$ if $\rho=0$.  

\end{enumerate}
\end{theorem}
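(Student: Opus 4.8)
The plan is to compute the minimal polynomial $m_\lambda(x)$ through Lemma~\ref{lem-ls0}, so everything reduces to finding $\gcd(x^n-1,\Lambda(x))$ over $\gf(q)$, where $\Lambda(x)=\rho+\sum_{i\in C_1^{(4,n)}\cup C_2^{(4,n)}\cup C_3^{(4,n)}}x^i$. Since $q\in C_0^{(4,n)}$, all four factors $\Omega_i^{(4,n)}(x)$ lie in $\gf(q)[x]$ and $x^n-1=(x-1)\prod_{i=0}^{3}\Omega_i^{(4,n)}(x)$, so this gcd has the shape $(x-1)^{\epsilon}\prod_{i\in S}\Omega_i^{(4,n)}(x)$ for some $\epsilon\in\{0,1\}$ and $S\subseteq\{0,1,2,3\}$. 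By (\ref{eqn-opendayall22}) and (\ref{eqn-opendayall22b}) we have $\Lambda(1)=\rho$ and $\Lambda(\eta^j)=\rho-1-\eta_i$ for $j\in C_i^{(4,n)}$; hence $\epsilon=1$ if and only if $\rho=0$, and $i\in S$ if and only if $\eta_i=\rho-1$. As $\rho-1\in\{0,-1\}$, the entire theorem comes down to deciding, for each $i$, whether $\eta_i$ equals $0$, equals $-1$, or neither, and then translating the answer into $m_\lambda(x)$, $\ls_\lambda$ and the code parameters.

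To pin down the $\eta_i$, recall that $\eta$ has been fixed with $\eta_0+\eta_2=0$, so that $\eta_0+\eta_1+\eta_2+\eta_3=-1$ gives $\eta_1+\eta_3=-1$; this step already uses the hypothesis $p\mid\frac{n-1}{4}$, via (\ref{eqn-openday}). I would then compute the products $\eta_0\eta_2$ and $\eta_1\eta_3$ by expanding them into double sums over pairs of exponents and replacing each coefficient by the corresponding cyclotomic number of order $4$ read off from Table~\ref{tab-adset2} (using $-1\in C_0^{(4,n)}$, valid since $n\equiv1\pmod 8$). Both products collapse to $\gf(q)$-linear combinations of $\eta_0+\eta_2$ and $\eta_1+\eta_3$ alone, yielding $\eta_0\eta_2=-E$ and $\eta_1\eta_3=-C$, where $C\equiv\frac{n-3+2u}{16}$ and $E\equiv\frac{n+1-2u}{16}$ in $\gf(q)$. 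Since $\eta_2=-\eta_0$ and $\eta_3=-1-\eta_1$, these become the two master identities $\eta_0^{2}=E$ and $\eta_1^{2}+\eta_1=C$ in $\gf(q^{\ord_n(q)})$, on which the whole proof turns.

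The case analysis now reads directly off these identities. From $\eta_0^{2}=E$: if $E\equiv0$ then $\eta_0=\eta_2=0$; if $E\equiv1$ then $\{\eta_0,\eta_2\}=\{1,-1\}$, which degenerates to $\eta_0=\eta_2=1$ when $p=2$; otherwise $\eta_0,\eta_2\notin\{0,1,-1\}$, hence in particular $\eta_0,\eta_2\notin\{0,-1\}$. From $\eta_1^{2}+\eta_1=C$: if $C\equiv0$ then $\eta_1(\eta_1+1)=0$, so $\{\eta_1,\eta_3\}=\{0,-1\}$, while if $C\not\equiv0$ neither $\eta_1$ nor $\eta_3$ lies in $\{0,-1\}$. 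Feeding the resulting value sets into ``$i\in S$ iff $\eta_i=\rho-1$'' and splitting further on $\rho\in\{0,1\}$ and, only when $\{\eta_0,\eta_2\}=\{1,-1\}$, on whether $p=2$, recovers precisely the six cases together with the displayed $m_\lambda(x)$; then $\ls_\lambda=n-\deg\gcd(x^n-1,\Lambda(x))$ gives the stated linear span, the generator polynomial of $\C_\lambda$ is $m_\lambda(x)$ by definition, and $\dim\C_\lambda=n-\ls_\lambda=\deg\gcd(x^n-1,\Lambda(x))$. The main obstacle is exactly this bookkeeping: in each case and sub-branch one must verify that $\{i:\eta_i=\rho-1\}$ is the set implicit in the stated generator polynomial, i.e. that there is no accidental coincidence $\eta_i=\rho-1$ beyond those forced by the two master identities, and that the residual freedom in the choice of $\eta$ (which can swap $\eta_0\leftrightarrow\eta_2$ and $\eta_1\leftrightarrow\eta_3$) is precisely what produces the ``if $\eta_0=1$/if $\eta_0=-1$'' and ``if $\eta_1=0$/if $\eta_1=-1$'' alternatives.

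For the minimum-weight statements, note that the parity-check polynomial of $\C_\lambda$ equals $\gcd(x^n-1,\Lambda(x))$. When this equals $(x-1)\Omega_i^{(4,n)}(x)$ and $\ord_n(q)=(n-1)/4$ (so $\Omega_i^{(4,n)}$ is irreducible over $\gf(q)$), $\C_\lambda$ is the code $\overline{\C}(r,N)$ of (\ref{eqn-irrecode2}) and Lemma~\ref{lem-O4bound2} applies; when it equals $\Omega_i^{(4,n)}(x)$, $\C_\lambda$ is the irreducible cyclic code $\C(r,N)$ of (\ref{eqn-irrecode}) and Lemma~\ref{lem-O4bound} applies; when it equals $x-1$, $\C_\lambda$ is the repetition code, so $d=n$; and in the remaining cases the set of exponents of the roots of the generator polynomial is a union of two cyclotomic classes of order $4$, with or without $0$ adjoined, i.e. a half of $\gf(n)^*$ (possibly augmented by $0$), so $\C_\lambda$ is a duadic or quadratic-residue code and the square-root bound $d_{odd}\ge\sqrt n$ (respectively $d\ge\sqrt n$) follows from the standard results \cite{HPbook,Leon84,Ples87,DLX,DP,MacSlo}.
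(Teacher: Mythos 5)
Your proposal is correct and follows essentially the same route as the paper: reduce to $\gcd(\Lambda(x),x^n-1)$ via Lemma \ref{lem-ls0}, evaluate $\Lambda$ at the $\eta^j$ through the period sums $\eta_i$, and pin down the $\eta_i$ by computing $\eta_0\eta_2$ and $\eta_1\eta_3$ with the order-$4$ cyclotomic numbers of Table \ref{tab-adset2}, which is exactly the content of the paper's identities (\ref{eqn-sept291})--(\ref{eqn-sept292}). Your two ``master identities'' $\eta_0^2=E$ and $\eta_1^2+\eta_1=C$ (with $E=\frac{n+1-2u}{16}$, $C=\frac{n-3+2u}{16}$) are just the paper's relations rewritten using $\eta_2=-\eta_0$ and $\eta_3=-1-\eta_1$, and they do yield all six cases uniformly, whereas the paper only writes out Case 1 and declares the rest similar.
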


\begin{proof} 
We prove the conclusions on the linear span and minimal polynomial of the sequence $\lambda^\infty$ 
for Case 1 only. The conclusions of other cases can be similarly proved. 

Since $n \equiv 1 \pmod{8}$, $-1 \in C_0^{(4,n)}$. By the definition of cyclotomic numbers, we have 
\begin{eqnarray*} 
\eta_\ell^2 
&=&  \left(\sum_{i \in C_{\ell}^{(4,n)}} \eta^i \right)^2 \\
&=& (\ell, \ell)_4 \eta_0 + (\ell+3, \ell+3)_4 \eta_1 + \\ 
& & (\ell+2, \ell+2)_4 \eta_2 + (\ell+1, \ell+1)_4 \eta_3+\frac{n-1}{4}.  
\end{eqnarray*}   

It then follows from Table \ref{tab-adset2} and the cyclotomic numbers of order 4 for the case 
$n \equiv 1 \pmod{8}$ that 
\begin{eqnarray*} 
\eta_0^2 &=& \frac{3n-1-2u}{16} -\frac{u+1}{2} \eta_0 +\frac{v}{2} \eta_1 -\frac{v}{2} \eta_3,  \\
\eta_1^2 &=& \frac{3n-1-2u}{16} -\frac{u+1}{2} \eta_1 +\frac{v}{2} \eta_2 -\frac{v}{2} \eta_0,  \\
\eta_2^2 &=& \frac{3n-1-2u}{16} -\frac{u+1}{2} \eta_2 +\frac{v}{2} \eta_3 -\frac{v}{2} \eta_1,  \\
\eta_3^2 &=& \frac{3n-1-2u}{16} -\frac{u+1}{2} \eta_3 +\frac{v}{2} \eta_0 -\frac{v}{2} \eta_2. 
\end{eqnarray*} 
Whence, 
\begin{eqnarray}\label{eqn-sept291}
\left\{ \begin{array}{l}
\eta_0^2 + \eta_2^2 = \frac{3n-1-2u}{8} -\frac{u+1}{2} (\eta_0 + \eta_2),  \\
\eta_1^2 + \eta_3^2 = \frac{3n-1-2u}{8} -\frac{u+1}{2} (\eta_1 + \eta_3).
\end{array} 
\right. 
\end{eqnarray}

Since $n \equiv 1 \pmod{8}$, $-1 \in C_0^{(4,n)}$. By the definition of cyclotomic numbers, we have 
\begin{eqnarray*} 
\eta_\ell \eta_{\ell+2}  
&=&  \sum_{i \in C_{\ell}^{(4,n)}}  \sum_{j \in C_{\ell+2}^{(4,n)}}  \eta^{i-j}   \\
&=& (\ell+2, \ell)_4 \eta_0 + (\ell+1, \ell+3)_4 \eta_1 + \\ 
& & (\ell, \ell+2)_4 \eta_2 + (\ell+3, \ell+1)_4 \eta_3.  
\end{eqnarray*}   
It then follows from the cyclotomic numbers of order 4 that 
\begin{eqnarray}\label{eqn-sept292}
\left\{ \begin{array}{l}
\eta_0 \eta_2 = -\frac{n+1-2u}{16} + \frac{u-1}{4} (\eta_0 + \eta_2),  \\
\eta_1 \eta_3 = -\frac{n+1-2u}{16} + \frac{u-1}{4} (\eta_1 + \eta_3). 
\end{array} 
\right. 
\end{eqnarray}

Since $\frac{n-1}{4} \bmod{p} =0$, 
\begin{eqnarray}\label{eqn-sept293}
\Lambda(1)=\rho. 
\end{eqnarray}

Recall that $\eta_0+\eta_2=0$ and $\eta_1+\eta_3=-1$.  In Case 1, by (\ref{eqn-sept291}) and 
(\ref{eqn-sept292}), we have 
$$ 
\eta_0=\eta_2=0, \ \eta_1(\eta_1+1)=\eta_3(\eta_3+1)=0. 
$$  

It then follows from (\ref{eqn-opendayall22b}) and (\ref{eqn-sept293}) that 
\begin{eqnarray*}
\lefteqn{\gcd(\Lambda(x), x^n-1)=} \\ 
& \left\{ \begin{array}{ll} 
(x-1)\Omega_3^{(4,n)}(x) & \mbox{ if } \eta_1=0 \mbox{ and } \rho=0 \\  
(x-1)\Omega_1^{(4,n)}(x) & \mbox{ if } \eta_1=-1 \mbox{ and } \rho=0 \\  
\Omega_1^{(4,n)}(x) \Omega_0^{(4,n)}(x) \Omega_2^{(4,n)}(x) & \mbox{ if } \eta_1=0 \mbox{ and } \rho=1 \\  
\Omega_3^{(4,n)}(x) \Omega_0^{(4,n)}(x) \Omega_2^{(4,n)}(x) & \mbox{ if } \eta_1=-1 \mbox{ and } \rho=1. \\  
                               \end{array} 
                    \right.                                          
\end{eqnarray*} 

The desired conclusions on the linear span and the minimal polynomial of the sequence $\lambda^\infty$ 
for Case 1 then follow from  Lemma \ref{lem-ls0}.  

The desired conclusions on the dimension and the generator polynomial of the code 
$\C_\lambda$ follow from the conclusions on the linear span and the minimal polynomial 
of the sequence $\lambda^\infty$ and the definition of the code $\C_\lambda$. 
The conclusion on the minimum weight for each case follows from Lemmas (\ref{lem-O4bound}) or  
(\ref{lem-O4bound2}), or the square-root bound on the minimum weight in quadratic residue codes, 
or the square-root bound on the minimum odd-like weight in duadic codes \cite{HPbook}.  
\end{proof} 

\begin{example} 
Let $(p, m, n)=(2,1,113)$.  Then $q=2 \in C_0^{(4,n)}$ and $n=u^2+4v^2=(-7)^2+4\times 4^2$. 
Then 
$$ 
\frac{n+1-2u}{16} \bmod{p} =0 \mbox{ and } \frac{n-3+2u}{16} \bmod{p} =0.  
$$ 
So this is Case 1. Let $\rho=1$. 
Then $\calC_\lambda$ is a $[113,84,8]$ cyclic code over $\gf(q)$ 
with generator polynomial 
\begin{eqnarray*} 
x^{29} + x^{27} + x^{26} + x^{22} + x^{21} + x^{18} + x^{16} + \\ 
x^{13} + x^{11} + x^8 +
    x^7 + x^3 + x^2 + 1
\end{eqnarray*} 
The best binary linear code known of length 113 and dimension 84 has minimum weight 10. 
\end{example}

\begin{example} \label{exam-12.5t} 
Let $(p, m, n)=(2,1,113)$.  Then $q=2 \in C_0^{(4,n)}$ and $n=u^2+4v^2=(-7)^2+4\times 4^2$. 
Then 
$$ 
\frac{n+1-2u}{16} \bmod{p} =0 \mbox{ and } \frac{n-3+2u}{16} \bmod{p} =0.  
$$ 
So this is Case 1. Let $\rho=0$. 
Then $\calC_\lambda$ is a $[113,29,28]$ cyclic code over $\gf(q)$ 
with generator polynomial 
\begin{eqnarray*} 
x^{84} + x^{82} + x^{81} + x^{80} + x^{76} + x^{75} + x^{74} + \\ 
x^{73} + x^{72} + x^{70} +  x^{68} + x^{66} + x^{65} + x^{64} + \\ 
    x^{63} + x^{62} + x^{60} + x^{59} + x^{58} + x^{57} + x^{56} + \\ 
    x^{55} + x^{53} + x^{47} + x^{46} + x^{43} + x^{42} + x^{41} + \\ 
    x^{38} + x^{37} + x^{31} + x^{29} + x^{28} + x^{27} + x^{26} + \\ 
    x^{25} + x^{24} + x^{22} + x^{21} + x^{20} + x^{19} + x^{18} + \\ 
    x^{16} +x^{14} + x^{12} + x^{11} + x^{10} + x^9 + x^8 + \\ 
    x^4 +  
    x^3 + x^2 + 1. 
\end{eqnarray*} 
The best binary linear code known of length 113 and dimension 29 has minimum weight 32.   
\end{example}

\begin{example}\label{exam-12.6t}  
Let $(p, m, n)=(2,2,41)$.  Then $q=4 \in C_0^{(4,n)}$ and $n=u^2+4v^2=5^2+4\times 1^2$. 
Then 
$$ 
\frac{n+1-2u}{16} \bmod{p} =0 \mbox{ and } \frac{n-3+2u}{16} \bmod{p} =1.  
$$ 
So this is Case 2. Let $\rho=0$. 
Then $\calC_\lambda$ is a $[41,1,41]$ cyclic code over $\gf(q)$ 
with generator polynomial $(x^{41}-1)/(x-1)$. 
\end{example} 

\begin{example} \label{exam-12.7t} 
Let $(p, m, n)=(2,2,41)$.  Then $q=4 \in C_0^{(4,n)}$ and $n=u^2+4v^2=5^2+4\times 1^2$. 
Then 
$$ 
\frac{n+1-2u}{16} \bmod{p} =0 \mbox{ and } \frac{n-3+2u}{16} \bmod{p} =1.  
$$ 
So this is Case 2. Let $\rho=1$. 
Then $\calC_\lambda$ is a $[41,20,10]$ cyclic code over $\gf(q)$ 
with generator polynomial 
\begin{eqnarray*} 
x^{21} + x^{19} + x^{18} + x^{16} + x^{15} + x^{14} + x^{12} + \\ 
x^9 + x^7 + x^6 +
    x^5 + x^3 + x^2 + 1.
\end{eqnarray*}
\end{example} 

\begin{example} 
Let $(p, m, n)=(2,1,73)$.  Then $q=2 \in C_0^{(4,n)}$ and $n=u^2+4v^2=(-3)^2+4\times 4^2$. 
Then 
$$ 
\frac{n+1-2u}{16} \bmod{p} =1 \mbox{ and } \frac{n-3+2u}{16} \bmod{p} =0.  
$$ 
So this is Case 3. Let $\rho=0$. 
Then $\calC_\lambda$ is a $[73,55,6]$ cyclic code over $\gf(q)$ 
with generator polynomial 
$$
x^{18} + x^{16} + x^{15} + x^{14} + x^{11} + x^{10} + x^9 + \\ 
x^8 + x^7 + x^4 + x^3
    + x^2 + 1.
$$
This may be the first known binary cyclic code with parameters $[73,55,6]$. Earlier, only a linear 
code with the same parameters was known. 
\end{example}

\begin{example}  
Let $(p, m, n)=(2,1,89)$.  Then $q=2 \in C_0^{(4,n)}$ and $n=u^2+4v^2=5^2+4\times 4^2$. 
Then 
$$ 
\frac{n+1-2u}{16} \bmod{p} =1 \mbox{ and } \frac{n-3+2u}{16} \bmod{p} =0.  
$$ 
So this is Case 3. Let $\rho=0$. 
Then $\calC_\lambda$ is a $[89,67,7]$ cyclic code over $\gf(q)$ 
with generator polynomial 
$$
x^{22} + x^{19} + x^{17} + x^{15} + x^{12} + x^{11} + \\ 
x^{10} + x^7 + x^5 + x^3 + 1. 
$$
The best linear code with length 89 and dimension 67 has minimum weight 8.  
This may be the first cyclic code known with parameters $[89,67,7]$. 
\end{example} 

\begin{example} \label{exam-12.10t} 
Let $(p, m, n)=(2,1,73)$.  Then $q=2 \in C_0^{(4,n)}$ and $n=u^2+4v^2=(-3)^2+4\times 4^2$. 
Then 
$$ 
\frac{n+1-2u}{16} \bmod{p} =1 \mbox{ and } \frac{n-3+2u}{16} \bmod{p} =0.  
$$ 
So this is Case 3. Let $\rho=1$. 
Then $\calC_\lambda$ is a $[73,18,24]$ cyclic code over $\gf(q)$ 
with generator polynomial 
\begin{eqnarray*} 
x^{55} + x^{53} + x^{52} + x^{47} + x^{43} + x^{41} + x^{40} + \\ 
x^{39} + x^{38} + x^{37} +
    x^{35} + x^{34} + x^{32} + x^{31} + \\
    x^{30} + x^{25} + x^{24} + x^{23} + x^{21} + x^{20} + x^{18} + \\
    x^{17} + x^{16} + x^{15} + x^{14} + x^{12} + x^8 + x^3 + x^2 + 1.
\end{eqnarray*} 
This may be the first known binary cyclic code with parameters $[73,18,24]$. Earlier, only a linear 
code with the same parameters was known. 
\end{example} 

\begin{example}\label{exam-12.11t}  
Let $(p, m, n)=(2,1,89)$.  Then $q=2 \in C_0^{(4,n)}$ and $n=u^2+4v^2=5^2+4\times 4^2$. 
Then 
$$ 
\frac{n+1-2u}{16} \bmod{p} =1 \mbox{ and } \frac{n-3+2u}{16} \bmod{p} =0.  
$$ 
So this is Case 3. Let $\rho=1$. 
Then $\calC_\lambda$ is a $[89,22,28]$ cyclic code over $\gf(q)$ 
with generator polynomial 
\begin{eqnarray*}
x^{67} + x^{64} + x^{62} + x^{61} + x^{60} + x^{58} + x^{53} + \\ 
x^{52} + x^{51} + x^{50} + x^{48} + x^{47} +  x^{45} +  x^{44} + \\ 
x^{41} +   x^{39} + x^{36} + x^{31} + x^{28} + x^{26} + x^{23} + \\ 
x^{22} + x^{20} + x^{19} + x^{17} + x^{16} + x^{15} + x^{14} + \\ 
x^9 + x^7 + x^6 + x^5 + x^3 + 1. 
\end{eqnarray*}
The best linear code with length 89 and dimension 22 has minimum weight 28.  
This may be the first cyclic code known with these parameters. 
\end{example} 

\begin{example} \label{exam-12.12t} 
Let $(p, m, n)=(2,2,17)$.  Then $q=4 \in C_0^{(4,n)}$ and $n=u^2+4v^2=1^2+4\times 2^2$. 
Then 
$$ 
\frac{n+1-2u}{16} \bmod{p} =1 \mbox{ and } \frac{n-3+2u}{16} \bmod{p} =1.  
$$ 
So this is Case 4. Let $\rho=0$. 
Then $\calC_\lambda$ is a $[17,9,5]$ cyclic code over $\gf(q)$ 
with generator polynomial 
\begin{eqnarray*}
x^8 + x^7 + x^6 + x^4 + x^2 + x + 1. 
\end{eqnarray*}
The best linear code with length 17 and dimension 9 has minimum weight 7.  
\end{example} 

\begin{remark} 
It was proved in \cite{Dingt} that $C_0^{(4,n)}$ is a $(n, (n-1)/4, (n-3)/16, (n-1)/2)$ almost difference set in 
$(\gf(n), +)$ when $n =5^2+4v^2$ or $n =(-3)^2+4v^2$. Examples \ref{exam-12.7t}, 
\ref{exam-12.10t}, and \ref{exam-12.11t} show that the cyclic codes defined by such almost 
difference sets are very good.   
\end{remark} 

\begin{remark} 
It was proved in \cite{DHL99} that $C_0^{(4,n)} \cup \{0\}$ is a $(n, (n+3)/4, (n-5)/16, (n-1)/2)$ almost difference set in 
$(\gf(n), +)$ when $n =1^2+4v^2$ or $n =(-7)^2+4v^2$. Examples \ref{exam-12.5t} and 
\ref{exam-12.12t} indicate that the cyclic code defined by such almost 
difference sets are very good.   
\end{remark}

When $n \equiv 5 \pmod{8}$, the linear span and minimal polynomial of the sequence $\lambda^{\infty}$ 
as well as the parameters of the code $\C_\lambda$ are given in the following theorem. 

\begin{theorem}\label{thm-bilegendre85}
Let $\frac{n-1}{4} \equiv 0 \pmod{p}$ and $q \in C_0^{(4,n)}$, and let $n \equiv 5 \pmod{8}$. 
let $\lambda^{\infty}$ be the sequence of period $n$ over $\gf(q)$ defined in (\ref{eqn-seqorder42}). 
As before, $n=u^2+4v^2$ with $u \equiv 1 \pmod{4}$.  
\begin{enumerate}

\item When $\frac{3n-1+2u}{16} \equiv 0 \pmod{p}$ and $\frac{3n+3-2u}{16} \equiv 0 \pmod{p}$, 
\begin{eqnarray*}
\lefteqn{m_\lambda(x)=} \\ 
&\left\{ \begin{array}{l} 
\frac{x^n-1}{(x-1)\Omega_3^{(4,n)}(x)}  \mbox{ if } \eta_1=0 \mbox{ and } \rho=0 \\  
\frac{x^n-1}{(x-1)\Omega_1^{(4,n)}(x)}  \mbox{ if } \eta_1=-1 \mbox{ and } \rho=0 \\  
\frac{x^n-1}{\Omega_1^{(4,n)}(x) \Omega_0^{(4,n)}(x) \Omega_2^{(4,n)}(x)}  \mbox{ if } \eta_1=0 \mbox{ and } \rho=1 \\  
\frac{x^n-1}{\Omega_3^{(4,n)}(x) \Omega_0^{(4,n)}(x) \Omega_2^{(4,n)}(x)}  \mbox{ if } \eta_1=-1 \mbox{ and } \rho=1 \\  
                               \end{array} 
                    \right.                                          
\end{eqnarray*} 
and   
\begin{eqnarray*}
\ls_\lambda=\left\{ \begin{array}{ll} 
n-\frac{n+3}{4}  & \mbox{ if } \eta_1=0 \mbox{ and } \rho=0 \\  
n-\frac{n+3}{4} & \mbox{ if } \eta_1=-1 \mbox{ and } \rho=0 \\  
n- \frac{3n-3}{4} & \mbox{ if } \eta_1=0 \mbox{ and } \rho=1 \\  
n- \frac{3n-3}{4}  & \mbox{ if } \eta_1=-1 \mbox{ and } \rho=1. \\  
                               \end{array} 
                    \right.                                          
\end{eqnarray*} 
In this case, the cyclic code $\calC_\lambda$  over $\gf(q)$ defined by the sequence $\lambda^{\infty}$ has 
the generator polynomial $m_\lambda(x)$ and parameters $\left[n, n-\ls_\lambda, d \right]$. 
In addition, if $\eta_1=0$ and  $\rho=0$ or $\eta_1=-1$ and  $\rho=0$, 
the minimum weight $d$ of the code has the lower bound of Lemma \ref{lem-O4bound2}, 
provided that $\ord_{n}(q)=(n-1)/4$.

\item  When $\frac{3n-1+2u}{16} \equiv 0 \pmod{p}$ and $\frac{3n+3-2u}{16} \not\equiv 0 \pmod{p}$, 
\begin{eqnarray*}
m_\lambda(x)=\left\{ \begin{array}{ll} 
\frac{x^n-1}{x-1} & \mbox{ if }  \rho=0 \\  
\frac{x^n-1}{\Omega_0^{(4,n)}(x) \Omega_2^{(4,n)}(x)} & \mbox{ if }  \rho=1  
                               \end{array} 
                    \right.                                          
\end{eqnarray*} 
and   
\begin{eqnarray*}
\ls_\lambda=\left\{ \begin{array}{ll} 
n-1  & \mbox{ if }  \rho=0 \\  
n- \frac{n-1}{2}  & \mbox{ if }  \rho=1. \\  
                               \end{array} 
                    \right.                                          
\end{eqnarray*} 
In this case, the cyclic code $\calC_\lambda$  over $\gf(q)$ defined by the sequence $\lambda^{\infty}$ has 
the generator polynomial $m_\lambda(x)$ and parameters $\left[n, n-\ls_\lambda, d \right]$, where 
\begin{eqnarray*}
\left\{ \begin{array}{ll} 
d=n  & \mbox{ if }  \rho=0 \\  
d \ge \sqrt{n}   & \mbox{ if }  \rho=1. \\  
                               \end{array} 
                    \right.                                          
\end{eqnarray*}

\item When $\frac{3n-1+2u}{16} \equiv p-1 \pmod{p}$ and $\frac{3n+3-2u}{16} \equiv 0 \pmod{p}$, 
\begin{eqnarray*}
\lefteqn{m_\lambda(x)=} \\ 
& \left\{ \begin{array}{l} 
\frac{x^n-1}{(x-1)\Omega_3^{(4,n)}(x)  \Omega_2^{(4,n)}(x)}  \mbox{ if } \eta_0=1, \eta_1=0, \rho=0 \\  
\frac{x^n-1}{(x-1)\Omega_1^{(4,n)}(x)  \Omega_2^{(4,n)}(x)}  \mbox{ if } \eta_0=1, \eta_1=-1, \rho=0 \\ 
\frac{x^n-1}{(x-1)\Omega_3^{(4,n)}(x)  \Omega_0^{(4,n)}(x)}  \mbox{ if } \eta_0=-1, \eta_1=0, \rho=0 \\  
\frac{x^n-1}{(x-1)\Omega_1^{(4,n)}(x)  \Omega_0^{(4,n)}(x)}  \mbox{ if } \eta_0= \eta_1=-1, \rho=0 \\ 
\frac{x^n-1}{\Omega_1^{(4,n)}(x)  }  \mbox{ if }  \eta_1=0, \rho=1 \\  
\frac{x^n-1}{\Omega_3^{(4,n)}(x)  }  \mbox{ if }  \eta_1=-1, \rho=1 
                               \end{array} 
                    \right.                                          
\end{eqnarray*} 
and   
\begin{eqnarray*} 
\ls_\lambda =\left\{ \begin{array}{ll} 
n-\frac{n+1}{2}  & \mbox{ if } \eta_0=1, \eta_1=0, \rho=0 \\  
n-\frac{n+1}{2}  & \mbox{ if } \eta_0=1, \eta_1=-1, \rho=0 \\ 
n-\frac{n+1}{2}  & \mbox{ if } \eta_0=-1, \eta_1=0, \rho=0 \\  
n-\frac{n+1}{2}  & \mbox{ if } \eta_0=-1, \eta_1=-1, \rho=0 \\ 
n-\frac{n-1}{4}  & \mbox{ if }  \eta_1=0, \rho=1 \\  
n-\frac{n-1}{4}  & \mbox{ if }  \eta_1=-1, \rho=1.  
                               \end{array} 
                    \right.                                          
\end{eqnarray*} 
In this case, the cyclic code $\calC_\lambda$  over $\gf(q)$ defined by the sequence $\lambda^{\infty}$ has 
the generator polynomial $m_\lambda(x)$ and parameters $\left[n, n-\ls_\lambda, d \right]$.  
In addition, if $\eta_1=0$ and  $\rho=1$ or  $\eta_1=-1$ and  $\rho=1$, 
the minimum weight $d$ of the code has the lower bound of Lemma \ref{lem-O4bound}, 
provided that $\ord_{n}(q)=(n-1)/4$. 
In the rest four cases, the code is a duadic code and hence the minimum odd-like weigh 
$d_{odd} \ge \sqrt{n}$.

\item When $\frac{3n-1+2u}{16} \equiv p-1 \pmod{p}$ and $\frac{3n+3-2u}{16} \not\equiv 0 \pmod{p}$, 
\begin{eqnarray*}
m_\lambda(x)=\left\{ \begin{array}{ll} 
\frac{x^n-1}{(x-1) \Omega_2^{(4,n)}(x)} & \mbox{ if }  \eta_0=1, \rho=0 \\  
\frac{x^n-1}{(x-1) \Omega_0^{(4,n)}(x)} & \mbox{ if }  \eta_0=-1, \rho=0 \\  
x^n-1 & \mbox{ if }  \rho=1 
                               \end{array} 
                    \right.                                          
\end{eqnarray*} 
and 
\begin{eqnarray*}
\ls_\lambda =\left\{ \begin{array}{ll} 
n-\frac{n+3}{4} & \mbox{ if }  \eta_0=1, \rho=0 \\  
n-\frac{n+3}{4} & \mbox{ if }  \eta_0=-1, \rho=0 \\  
n & \mbox{ if }  \rho=1. 
                               \end{array} 
                    \right.                                          
\end{eqnarray*} 
In this case, the cyclic code $\calC_\lambda$  over $\gf(q)$ defined by the sequence $\lambda^{\infty}$ has 
the generator polynomial $m_\lambda(x)$ and parameters $\left[n, n-\ls_\lambda, d \right]$.   
In addition, if $\eta_0=1$ and  $\rho=0$ or  $\eta_0=-1$ and  $\rho=0$, 
the minimum weight $d$ of the code has the lower bound of Lemma \ref{lem-O4bound2}, 
provided that $\ord_{n}(q)=(n-1)/4$.

\item When $\frac{3n-1+2u}{16} \not\equiv 0, p-1  \pmod{p} $ and $\frac{3n+3-2u}{16} \equiv 0 \pmod{p}$, 
\begin{eqnarray*}
m_\lambda(x)=\left\{ \begin{array}{ll} 
\frac{x^n-1}{(x-1)\Omega_3^{(4,n)}(x)  } & \mbox{ if }  \eta_1=0, \rho=0 \\  
\frac{x^n-1}{(x-1)\Omega_1^{(4,n)}(x)  } & \mbox{ if }  \eta_1=-1, \rho=0 \\ 
\frac{x^n-1}{\Omega_1^{(4,n)}(x)  } & \mbox{ if }  \eta_1=0, \rho=1 \\  
\frac{x^n-1}{\Omega_3^{(4,n)}(x)  } & \mbox{ if }  \eta_1=-1, \rho=1 
                               \end{array} 
                    \right.                                          
\end{eqnarray*} 
and 
\begin{eqnarray*}
\ls_\lambda =\left\{ \begin{array}{ll} 
n-\frac{n+3}{4} & \mbox{ if }  \eta_1=0, \rho=0 \\  
n-\frac{n+3}{4} & \mbox{ if }  \eta_1=-1, \rho=0 \\ 
n-\frac{n-1}{4} & \mbox{ if }  \eta_1=0, \rho=1 \\  
n-\frac{n-1}{4} & \mbox{ if }  \eta_1=-1, \rho=1.  
                               \end{array} 
                    \right.                                          
\end{eqnarray*} 
In this case, the cyclic code $\calC_\lambda$  over $\gf(q)$ defined by the sequence $\lambda^{\infty}$ has 
the generator polynomial $m_\lambda(x)$ and parameters $\left[n, n-\ls_\lambda, d \right]$. 
Furthermore, if $\eta_1=0$ and  $\rho=0$ or  $\eta_1=-1$ and  $\rho=0$, 
the minimum weight $d$ of the code has the lower bound of Lemma \ref{lem-O4bound2}, 
provided that $\ord_{n}(q)=(n-1)/4$.  If $\eta_1=0$ and  $\rho=1$ or  $\eta_1=-1$ and  $\rho=1$, 
the minimum weight $d$ of the code has the lower bound of Lemma \ref{lem-O4bound}, 
provided that $\ord_{n}(q)=(n-1)/4$.

\item When $\frac{3n-1+2u}{16} \not\equiv 0, p-1  \pmod{p} $  and $\frac{3n+3-2u}{16} \not\equiv 0 \pmod{p}$, 
\begin{eqnarray*}
m_\lambda(x)=\left\{ \begin{array}{ll} 
\frac{x^n-1}{x-1} & \mbox{ if }  \rho=0 \\  
x^n-1 & \mbox{ if }  \rho=1 
                               \end{array} 
                    \right.                                          
\end{eqnarray*} 
and 
\begin{eqnarray*}
\ls_\lambda =\left\{ \begin{array}{ll} 
n-1 & \mbox{ if }  \rho=0 \\  
n & \mbox{ if }  \rho=1. 
                               \end{array} 
                    \right.                                          
\end{eqnarray*} 
In this case, the cyclic code $\calC_\lambda$  over $\gf(q)$ defined by the sequence $\lambda^{\infty}$ has 
the generator polynomial $m_\lambda(x)$ and parameters $\left[n, n-\ls_\lambda, d \right]$, where  
$d=n$ if $\rho=0$. 
\end{enumerate}
\end{theorem}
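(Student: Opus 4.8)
The proof is parallel to that of Theorem~\ref{thm-bilegendre81}; the only change is that one uses the relations among the cyclotomic numbers of order~$4$ for the case $n \equiv 5 \pmod 8$ recorded in Table~\ref{tab-adset1}, together with the explicit values $A=\frac{n-7+2u}{16}$, $B=\frac{n+1+2u-8v}{16}$, $C=\frac{n+1-6u}{16}$, $D=\frac{n+1+2u+8v}{16}$, $E=\frac{n-3-2u}{16}$. Observe first that here $-1 \in C_2^{(4,n)}$ and $v$ is odd; moreover, since $n \equiv 5 \pmod 8$ makes $(n-1)/4$ odd, the standing hypothesis $\frac{n-1}{4}\equiv 0\pmod p$ forces $p$ to be an odd prime, so no separate ``$p=2$'' subcase occurs (unlike in Theorem~\ref{thm-bilegendre81}).

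First I would compute, for each $\ell \in \{0,1,2,3\}$, the squares $\eta_\ell^2$ and the cross products $\eta_\ell\eta_{\ell+2}$ as $\gf(q)$-linear combinations of $\eta_0,\eta_1,\eta_2,\eta_3$; since $-1\in C_2^{(4,n)}$, one has $\eta_\ell\eta_m=\sum_{i\in C_\ell^{(4,n)},\,j\in C_{m+2}^{(4,n)}}\eta^{i+j}$, so the relevant cyclotomic numbers are those with indices shifted by $2$ relative to the $n\equiv 1\pmod 8$ computation. Adding the expressions for $\ell$ and $\ell+2$ and invoking the normalisation $\eta_0+\eta_2=0$ fixed at the start of Section~\ref{sec-codecyc42} together with $\eta_1+\eta_3=-1$ (a consequence of $\eta_0+\eta_1+\eta_2+\eta_3=-1$), the cross products collapse to $\eta_0\eta_2=c_1$ and $\eta_1\eta_3=c_2$ for explicit $c_1,c_2\in\gf(q)$; substituting $\eta_2=-\eta_0$ and $\eta_3=-1-\eta_1$ gives $\eta_0^2=-c_1$ and $\eta_1(\eta_1+1)=-c_2$. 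After reduction modulo $p$ and use of $\frac{n-1}{4}\equiv 0\pmod p$, one finds that $c_1\bmod p$ is controlled by $\frac{3n-1+2u}{16}\bmod p$ (it vanishes iff the latter does, and equals $-1$ iff the latter is $p-1$), and $c_2\bmod p$ by $\frac{3n+3-2u}{16}\bmod p$ (it vanishes iff the latter does). This is precisely what produces the six cases: $\frac{3n-1+2u}{16}\equiv 0$ forces $\eta_0=\eta_2=0$, $\frac{3n-1+2u}{16}\equiv p-1$ forces $\eta_0,\eta_2\in\{1,-1\}$, and otherwise $\eta_0,\eta_2\notin\{0,1,-1\}$; likewise $\frac{3n+3-2u}{16}\equiv 0$ forces $\eta_1,\eta_3\in\{0,-1\}$, and otherwise $\eta_1,\eta_3\notin\{0,-1\}$.

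Next, in each case I would use the evaluations $\Lambda(\eta^j)=\rho-1-\eta_i$ for $j\in C_i^{(4,n)}$ from (\ref{eqn-opendayall22b}) and $\Lambda(1)=\rho$ from (\ref{eqn-opendayall22}) to decide which of the factors $x-1,\ \Omega_0^{(4,n)}(x),\dots,\Omega_3^{(4,n)}(x)$ divides $\gcd(\Lambda(x),x^n-1)$: the factor $\Omega_i^{(4,n)}(x)$ divides it iff $\eta_i=\rho-1$, and $x-1$ divides it iff $\rho=0$. Lemma~\ref{lem-ls0} then converts each gcd into the stated minimal polynomial $m_\lambda(x)$ and linear span $\ls_\lambda=n-\deg\gcd(\Lambda(x),x^n-1)$, and by the definition of $\C_\lambda$ this yields the length $n$, the dimension $n-\ls_\lambda$, and the generator polynomial $m_\lambda(x)$. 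Finally, the lower bound on the minimum weight $d$ in each case follows by inspecting the shape of $m_\lambda(x)$: when the parity-check polynomial of $\C_\lambda$ is $(x-1)\Omega_i^{(4,n)}(x)$ one applies Lemma~\ref{lem-O4bound2}; when it is $\Omega_i^{(4,n)}(x)$ one applies Lemma~\ref{lem-O4bound}; when the shape of $m_\lambda(x)$ makes $\C_\lambda$ a duadic (or quadratic-residue) code, with the splitting fixed by the multiplier $\mu_{-1}$ available since $-1\in C_2^{(4,n)}$, one invokes the square-root bound $d_{odd}\ge\sqrt{n}$ \cite{HPbook,MacSlo}; and when $m_\lambda(x)=(x^n-1)/(x-1)$ one has $d=n$, while $m_\lambda(x)=x^n-1$ gives the whole space.

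The one delicate point is the bookkeeping of the cyclotomic-number identities and their reductions modulo $p$, in particular keeping straight the index shifts caused by $-1\in C_2^{(4,n)}$; beyond that, the argument is a mechanical transcription, case by case over the six possibilities, of the one already carried out for $n\equiv 1\pmod 8$.
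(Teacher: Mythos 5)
Your proposal follows essentially the same route as the paper's own proof: compute $\eta_\ell\eta_{\ell+2}$ (and $\eta_\ell^2$) via the cyclotomic numbers of Table~\ref{tab-adset1} with the index shift forced by $-1\in C_2^{(4,n)}$, reduce using $\eta_0+\eta_2=0$ and $\eta_1+\eta_3=-1$ to see that $\eta_0^2$ is governed by $\frac{3n-1+2u}{16}\bmod p$ and $\eta_1(\eta_1+1)$ by $\frac{3n+3-2u}{16}\bmod p$, then read off $\gcd(\Lambda(x),x^n-1)$ from (\ref{eqn-opendayall22b})--(\ref{eqn-opendayall22}), apply Lemma~\ref{lem-ls0}, and invoke Lemmas~\ref{lem-O4bound}, \ref{lem-O4bound2} or the duadic square-root bound exactly as the paper does (which only writes out Case~3 and leaves the rest as analogous). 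The one blemish is the displayed identity $\eta_\ell\eta_m=\sum_{i\in C_\ell^{(4,n)},\,j\in C_{m+2}^{(4,n)}}\eta^{i+j}$, where the exponent should be $i-j$ (since $\eta_m=\sum_{j\in C_{m+2}^{(4,n)}}\eta^{-j}$); as your subsequent use of the shifted cyclotomic numbers is correct, this is only a typo, not a gap.
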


\begin{proof} 
We prove only the conclusion of Case 3. The conclusions of other cases can be similarly proved. 

Since $n \equiv 5 \pmod{8}$, $-1 \in C_2^{(4,n)}$. Note that $p$ must be odd, as $n \equiv 5 \pmod{8}$ 
and $\frac{n-1}{4} \bmod{p}=0$. By the definition of cyclotomic numbers, we have 
\begin{eqnarray*} 
\eta_\ell^2 
&=&  \left(\sum_{i \in C_{\ell}^{(4,n)}} \eta^i \right)^2 \\
&=&  \sum_{i \in C_{\ell}^{(4,n)}}  \sum_{j \in C_{\ell +2}^{(4,n)}}  \eta^{i-j}  \\
&=& (\ell+2, \ell)_4 \eta_0 + (\ell+1, \ell+3)_4 \eta_1 + \\ 
& & (\ell, \ell+2)_4 \eta_2 + (\ell+3, \ell+1)_4 \eta_3.  
\end{eqnarray*}   

It then follows from Table \ref{tab-adset1} and the cyclotomic numbers of order 4 for the case 
$n \equiv 1 \pmod{8}$ that 
\begin{eqnarray*} 
\eta_0^2 &=& -\frac{n+1-6u}{16} + \frac{u-1}{2} \eta_0 +\frac{u-v}{2} \eta_1 -\frac{u+v}{2} \eta_3,  \\
\eta_1^2 &=& -\frac{n+1-6u}{16} + \frac{u-1}{2} \eta_1 +\frac{u-v}{2} \eta_2 -\frac{u+v}{2} \eta_0,  \\
\eta_2^2 &=& -\frac{n+1-6u}{16} + \frac{u-1}{2} \eta_2 +\frac{u-v}{2} \eta_3 -\frac{u+v}{2} \eta_1,  \\
\eta_3^2 &=& -\frac{n+1-6u}{16} + \frac{u-1}{2} \eta_3 +\frac{u-v}{2} \eta_0 -\frac{u+v}{2} \eta_2. 
\end{eqnarray*} 
Whence, 
\begin{eqnarray}\label{eqn-sept291b}
\left\{ \begin{array}{l}
\eta_0^2 + \eta_2^2 = -\frac{n+1+2u}{8} -\frac{u+1}{2} (\eta_0 + \eta_2),  \\
\eta_1^2 + \eta_3^2 = -\frac{n+1+2u}{8} -\frac{u+1}{2} (\eta_1 + \eta_3).
\end{array} 
\right. 
\end{eqnarray}

Since $n \equiv 1 \pmod{8}$, $-1 \in C_2^{(4,n)}$. By the definition of cyclotomic numbers, we have 
\begin{eqnarray*} 
\eta_\ell \eta_{\ell+2}  
&=&  \sum_{i \in C_{\ell}^{(4,n)}}  \sum_{j \in C_{\ell}^{(4,n)}}  \eta^{i-j}   \\
&=& (\ell, \ell)_4 \eta_0 + (\ell+3, \ell+3)_4 \eta_1 + \\ 
& & (\ell+2, \ell+2)_4 \eta_2 + (\ell+1, \ell+1)_4 \eta_3 + \frac{n-1}{4}.  
\end{eqnarray*}   
It then follows from the cyclotomic numbers of order 4 that 
\begin{eqnarray}\label{eqn-sept292b}
\left\{ \begin{array}{l}
\eta_0 \eta_2 = \frac{3n-1+2u}{16} + \frac{u-1}{4} (\eta_0 + \eta_2),  \\
\eta_1 \eta_3 = \frac{3n-1+2u}{16} + \frac{u-1}{4} (\eta_1 + \eta_3). 
\end{array} 
\right. 
\end{eqnarray}

Since $\frac{n-1}{4} \bmod{p} =0$, 
\begin{eqnarray}\label{eqn-sept293b}
\Lambda(1)=\rho. 
\end{eqnarray}

Recall that $\eta_0+\eta_2=0$ and $\eta_1+\eta_3=-1$.  In Case 3, by (\ref{eqn-sept291b}) and 
(\ref{eqn-sept292b}), we have 
$$ 
\eta_0=\eta_2=1, \ \eta_1(\eta_1+1)=\eta_3(\eta_3+1)=0. 
$$  

It then follows from (\ref{eqn-opendayall22b}) and (\ref{eqn-sept293b}) that 
\begin{eqnarray*}
\lefteqn{\gcd(\Lambda(x), x^n-1)=} \\ 
& 
\left\{ \begin{array}{ll} 
(x-1)\Omega_3^{(4,n)}(x)  \Omega_2^{(4,n)}(x) & \mbox{ if } \eta_0=1, \eta_1=0, \rho=0 \\  
(x-1)\Omega_1^{(4,n)}(x)  \Omega_2^{(4,n)}(x) & \mbox{ if } \eta_0=1, \eta_1=-1, \rho=0 \\ 
(x-1)\Omega_3^{(4,n)}(x)  \Omega_0^{(4,n)}(x) & \mbox{ if } \eta_0=-1, \eta_1=0, \rho=0 \\  
(x-1)\Omega_1^{(4,n)}(x)  \Omega_0^{(4,n)}(x) & \mbox{ if } \eta_0= \eta_1=-1, \rho=0 \\ 
\Omega_1^{(4,n)}(x)   & \mbox{ if }  \eta_1=0, \rho=1 \\  
\Omega_3^{(4,n)}(x)   & \mbox{ if }  \eta_1=-1, \rho=1 
                               \end{array} 
                    \right.                                          
\end{eqnarray*} 

The desired conclusions on the linear span and the minimal polynomial of the sequence $\lambda^\infty$ 
for Case 3 then follow from  Lemma \ref{lem-ls0}.  

The desired conclusions on the dimension and the generator polynomial of the code 
$\C_\lambda$ follow from the conclusions on the linear span and the minimal polynomial 
of the sequence $\lambda^\infty$ and the definition of the code $\C_\lambda$. 
The conclusion on the minimum weight for each case follows from Lemmas (\ref{lem-O4bound}) or  
(\ref{lem-O4bound2}), or the square-root bound on the minimum weight in quadratic residue codes, 
or the square-root bound on the minimum odd-like weight in duadic codes \cite{HPbook}.  
\end{proof}

\begin{example} \label{exam-12.14t}
Let $(p, m, n)=(3,2,61)$.  Then $q=9 \in C_0^{(4,n)}$ and $n=u^2+4v^2=5^2+4\times 3^2$. 
Then 
$$ 
\frac{3n-1+2u}{16} \bmod{p} = 0 \mbox{ and } \frac{3n+3-2u}{16} \bmod{p} =2.  
$$ 
So this is Case 2. Let $\rho=1$. 
Then $\calC_\lambda$ is a $[61,30,12]$ cyclic code over $\gf(q)$ 
with generator polynomial 
\begin{eqnarray*}
x^{31} + x^{29} + 2x^{28} + 2x^{27} + 2x^{26} + x^{25} + 2x^{22} + x^{19} + \\ 
    2x^{16} + x^{15} + 2x^{12} + x^9 + 2x^6 + x^5 + x^4 + x^3 + 2x^2 + 2. 
\end{eqnarray*}
The best linear code over $\gf(q)$ with length 61 and dimension 30 has minimum weight 20.  
\end{example}

\begin{example} \label{exam-12.15t} 
Let $(p, m, n)=(3,1,13)$.  Then $q=3 \in C_0^{(4,n)}$ and $n=u^2+4v^2=(-3)^2+4\times 1^2$. 
Then 
$$ 
\frac{3n-1+2u}{16} \bmod{p} = 2 \mbox{ and } \frac{3n+3-2u}{16} \bmod{p} =0.  
$$ 
So this is Case 3. Let $\rho=0$. 
Then $\calC_\lambda$ is a $[13,7,4]$ cyclic code over $\gf(q)$ 
with generator polynomial 
\begin{eqnarray*}
x^6 + 2x^5 + x^4 + 2x^3 + 2x^2 + 2x + 1. 
\end{eqnarray*}
The optimal linear code over $\gf(q)$ with length 13 and dimension 7 has minimum weight 5. 
The code of this example is almost optimal and cyclic.   
\end{example} 

\begin{example}  \label{exam-12.16t}
Let $(p, m, n)=(3,1,13)$.  Then $q=3 \in C_0^{(4,n)}$ and $n=u^2+4v^2=(-3)^2+4\times 1^2$. 
Then 
$$ 
\frac{3n-1+2u}{16} \bmod{p} = 2 \mbox{ and } \frac{3n+3-2u}{16} \bmod{p} =0.  
$$ 
So this is Case 3. Let $\rho=1$. 
Then $\calC_\lambda$ is a $[13,3,9]$ cyclic code over $\gf(q)$ 
with generator polynomial 
\begin{eqnarray*}
x^{10} + x^8 + x^7 + x^6 + 2x^5 + 2x^4 + x^2 + 2x + 1. 
\end{eqnarray*}
The known optimal linear code over $\gf(q)$ with length 13 and dimension 3 has minimum weight 9. 
The code of this example is both optimal and cyclic.   
\end{example}

\begin{example} \label{exam-12.17t} 
Let $(p, m, n)=(3,1,109)$.  Then $q=3 \in C_0^{(4,n)}$ and $n=u^2+4v^2=(-3)^2+4\times 5^2$. 
Then 
$$ 
\frac{3n-1+2u}{16} \bmod{p} = 2 \mbox{ and } \frac{3n+3-2u}{16} \bmod{p} =0.  
$$ 
So this is Case 3. Let $\rho=1$. 
Then $\calC_\lambda$ is a $[109,27,42]$ cyclic code over $\gf(q)$ 
with generator polynomial 
\begin{eqnarray*}
x^{82} + 2x^{80} + 2x^{79} + x^{78} + x^{77} + 2x^{76} + 2x^{75} + x^{74} + \\ 
x^{73} + 2x^{72} + 2x^{70} + 2x^{69} + 2x^{66} + 2x^{65} + 2x^{64} + 2x^{63} + \\ 
2x^{62} + 2x^{58} + x^{57} + x^{56} + 2x^{55} + x^{53} +  
 2x^{52} + 2x^{51} +  \\ 
    2x^{50} + x^{49} +  
    2x^{48} + 2x^{46} + 2x^{45} + x^{44} +  
    2x^{42} + x^{40} + \\ 
    2x^{39} + 2x^{38} + 2x^{35} +  
    2x^{32} + x^{31} + x^{29} + 2x^{28} + 2x^{27} + \\ x^{26} +  x^{25} + x^{24} +  
   x^{22} + x^{21} +
    x^{20} + 2x^{16} + 2x^{15} + \\ 
    2x^{14} + x^{12} +  x^{11} + 2x^8 + x^7 + 2x^5 + x^3 +
    2x + 1. 
\end{eqnarray*}
This has the same parameters as the best known code with length 109 and dimension 27 which is also cyclic.   
\end{example} 

\begin{example} 
Let $(p, m, n)=(7,1,29)$.  Then $q=7 \in C_0^{(4,n)}$ and $n=u^2+4v^2=5^2+4\times 1^2$. 
Then 
$$ 
\frac{3n-1+2u}{16} \bmod{p} = 6 \mbox{ and } \frac{3n+3-2u}{16} \bmod{p} =5.  
$$ 
So this is Case 4. Let $\rho=0$. 
Then $\calC_\lambda$ is a $[29,8,15]$ cyclic code over $\gf(q)$ 
with generator polynomial 
\begin{eqnarray*}
x^{21} + 3x^{19} + 2x^{18} + 5x^{17} + 5x^{16} + 6x^{15} + 5x^{14} + \\
4x^{13} + 4x^{12} +  x^{11} + 3x^{10} + x^9 + 4x^8 + 5x^7 +  \\ 
x^6 + x^5 + 6x^4 + 3x^3 +
    4x^2 + 5x + 6. 
\end{eqnarray*}
The known optimal linear code over $\gf(q)$ with length 29 and dimension 8 has minimum weight 17. 
\end{example}

\begin{remark} 
It was proved in \cite{Dingt} that $C_0^{(4,n)}$ is an $(n, (n-1)/4, (n-3)/16, (n-1)/2)$ almost difference set in 
$(\gf(n), +)$ when $n =(-3)^2+4v^2$ or $n =5^2+4v^2$. Examples \ref{exam-12.14t},   
\ref{exam-12.16t} and \ref{exam-12.17t} show that the cyclic code defined by such almost 
difference sets are very good.   
\end{remark} 

\begin{remark} 
It is known that $C_0^{(4,n)} \cup \{0\}$ is an $(n, (n-1)/4, (n+3)/16)$ difference set in 
$(\gf(n), +)$ when  $n =(-3)^2+4v^2$ and $v$ is odd. Examples \ref{exam-12.5t}  may 
indicate that the cyclic code defined by such 
difference sets are very good.   
\end{remark} 

\begin{open} 
Determine the parameters of the code $\C_\lambda$ defined by the sequence $\lambda^{\infty}$ of 
(\ref{eqn-seqorder42}) for the case that $\frac{n-1}{4} \not\equiv 0 \pmod{p}$. 
\end{open}

\section{Concluding remarks}

Perfect difference sets were used to construct cyclic codes in \cite{Weld}. The idea of 
constructing cyclic codes with special types of sequences employed in this paper could 
be viewed as an extension of this idea. 

There are several bounds on cyclic codes \cite{AF96,BS06,Bost,Bose,HT72,LintW}. It may 
not be easy to employ them to get tight bounds on the minimum weight of the cyclic 
codes presented in this paper. The actual miminum weight of these codes depends on 
the distribution of biquadratic residues modulo $n$, which looks to be a hard problem. 
However, some of the codes obtained in this paper are quadratic residue codes and duadic 
codes, which have a square-root bound on the minimum weight and the minimum odd-like 
weight respectively. In addition, we developed lower bounds on the minimum weight $d$ 
of some cyclic codes under certain conditions.  It would be nice if tight lower bounds on 
the minimum weight could be developed for the remaining cases.   

As a subclass of linear codes, cyclic codes usually have a smaller minimum weight 
compared with linear codes of the same length and dimension. However, some cyclic 
codes are optimal in the sense that they meet bounds defined for all linear codes, 
For example, the cyclic code of Example \ref{exam-order4balance1} is optimal. It 
is interesting to note that many of the example codes presented in this paper are 
the best possible cyclic codes and some are as good as the best linear codes with 
the same length and dimension.  For example, the binary cyclic code of Example 
\ref{exam-12.11t} has parameters $[89,22,28]$, which has the same parameters 
as the record binary linear code. These examples demonstrate that the cyclic codes 
defined by the cyclotomic sequences of order four are very good in general, but  
could be bad sometimes.    

The $p$-rank of the almost difference sets and difference sets is defined to be 
the linear span of the sequences over $\gf(p)$ defined by the almost difference 
sets and difference sets. The $p$-ranks of  the almost difference sets and 
difference sets can be used to distinguish them from other almost difference sets 
and difference sets. This is the contribution of this paper to combinatorics. 
The contribution of this paper to the theory of sequences and cryptography 
is the computation of the linear span of these cyclotomic sequences of order four.

\end{document}